\def\ps@pprintTitle{%
     \let\@oddhead\@empty
     \let\@evenhead\@empty
     \def\@oddfoot{\footnotesize\itshape
       Published in \ifx\@journal\@empty Elsevier
       \else\@journal\fi\hfill\today}%
     \let\@evenfoot\@oddfoot}
\newcommand{\bs}[1]{\boldsymbol{#1}}
\def\Rbb{\mathbb{R}}
\def\Ebb{\mathbb{E}}
\def\Pbb{\mathbb{P}}
\def\Hr{\mathcal{H}}
\def\hats{\hat{s}}
\def\xbs{\bs{x}}
\def\Xbs{\bs{X}}
\newcommand{\pval}[1]{$p$-value}
\newcommand{\pvals}[1]{$p$-values}
\newcommand{\Idisc}{I_{\rm discrete}}
\newcommand{\rev}[1]{\textcolor{black}{#1}}
\newtheorem{theorem}{Theorem}
\newtheorem{lemma}{Lemma}
\newtheorem{corollary}{Corollary}
\newtheorem{definition}{Definition}
\newtheorem{remark}{Remark}
\newcommand{\Dtest}{\mathcal{D}^{\rm test}}
\newcommand{\Dnull}{\mathcal{D}^{\rm null}}
\definecolor{color1}{HTML}{0011af}
\definecolor{color2}{HTML}{8819a0}
\definecolor{color3}{HTML}{bf418d}
\definecolor{color4}{HTML}{e37076}
\definecolor{color5}{HTML}{f9a256}
\definecolor{color6}{HTML}{FF0000}
\definecolor{color7}{HTML}{009F6B}
\definecolor{color8}{HTML}{00CC99}
\newacronym{fwer}{FWER}{family-wise error rate}
\newacronym{fdr}{FDR}{false discovery rate}
\newacronym{tdr}{TDR}{true discovery rate}
\newacronym{bh}{BH}{Benjamini-Hochberg}
\newacronym{iid}{iid}{independent and identically distributed}
\newacronym{erl}{ERL}{extreme rank length}
\newacronym{prds}{PRDS}{positive regression dependent on a subset}
\newacronym{get}{GET}{global envelope test}
\newacronym{mmctest}{MMCTest}{multiple Monte Carlo test}
\newacronym{cmmctest}{CMMCTest}{conformal multiple Monte Carlo test}
\journal{Spatial Statistics}
\begin{document}

% \printglossaries
% \newpage

\begin{frontmatter}

%% Title, authors and addresses

%% use the tnoteref command within \title for footnotes;
%% use the tnotetext command for theassociated footnote;
%% use the fnref command within \author or \affiliation for footnotes;
%% use the fntext command for theassociated footnote;
%% use the corref command within \author for corresponding author footnotes;
%% use the cortext command for theassociated footnote;
%% use the ead command for the email address,
%% and the form \ead[url] for the home page:
%% \title{Title\tnoteref{label1}}
%% \tnotetext[label1]{}
%% \author{Name\corref{cor1}\fnref{label2}}
%% \ead{email address}
%% \ead[url]{home page}
%% \fntext[label2]{}
%% \cortext[cor1]{}
%% \affiliation{organization={},
%%            addressline={}, 
%%            city={},
%%            postcode={}, 
%%            state={},
%%            country={}}
%% \fntext[label3]{}

% \title{Conformal novelty detection for replicate point patterns with FDR or FWER control} %% Article title
\title{Conformal novelty detection for replicate point patterns with FDR or FWER control} %% Article title

%% use optional labels to link authors explicitly to addresses:
\author[label1]{Christophe~A.~N.~Biscio} %% Author name
\author[label2]{Adrien~Mazoyer} %% Author name
\author[label1]{Martin~V.~Vejling\corref{cor1}} %% Author name

% \author[label1]{Martin~V.~Vejling\corref{cor1}} %% Author name
\cortext[cor1]{Main and corresponding author.}

%% Author affiliation
\affiliation[label1]{organization={Department of Mathematical Sciences, Aalborg University},%Department and Organization
            addressline={TMV 23},
            city={Aalborg},
            postcode={9000}, 
            % state={},
            country={Denmark}}

\affiliation[label2]{organization={Institut de Mathematiques de Toulouse, Universite de Toulouse},
%Department and Organization
            addressline={UPS}, 
            city={Toulouse},
            postcode={F-31062}, 
            % state={},
            country={France}}

%% Abstract
\begin{abstract}
% Max 250 words
Monte Carlo tests are widely used for computing valid p-values without requiring known distributions of test statistics. When performing multiple Monte Carlo tests, it is essential to maintain control of the type I error. Some techniques for multiplicity control pose requirements on the joint distribution of the p-values, for instance independence, which can be computationally intensive to achieve%
% using naïve multiple Monte Carlo testing
{, as it requires simulating disjoint null samples for each test. We refer to this as naïve multiple Monte Carlo testing}. We highlight in this work that multiple Monte Carlo testing is an instance of conformal novelty detection. Leveraging this insight enables a more efficient multiple Monte Carlo testing procedure, avoiding excessive simulations {by using a single null sample for all the tests,} while still ensuring exact control over the false discovery rate or the family-wise error rate. We call this approach conformal multiple Monte Carlo testing. The performance is investigated in the context of global envelope tests for point pattern data through a simulation study and an application to a sweat gland data set. Results reveal that with a fixed {simulation budget},
% number of simulations under the null hypothesis,
our proposed method yields substantial improvements in power of the testing procedure as compared to the naïve multiple Monte Carlo testing procedure.
\end{abstract}

%%Graphical abstract
% \begin{graphicalabstract}
% %\includegraphics{grabs}
% \end{graphicalabstract}

%%Research highlights
% \begin{highlights}
% \item Research highlight 1
% \item Research highlight 2
% \end{highlights}

%% Keywords
\begin{keyword}
%% keywords here, in the form: keyword \sep keyword
multiple hypothesis testing \sep global envelope tests \sep spatial statistics \sep Monte Carlo tests \sep point processes

%% PACS codes here, in the form: \PACS code \sep code

%% MSC codes here, in the form: \MSC code \sep code
%% or \MSC[2008] code \sep code (2000 is the default)

\end{keyword}

\end{frontmatter}

%% Add \usepackage{lineno} before \begin{document} and uncomment 
%% following line to enable line numbers
%% \linenumbers

%% main text
%%

\section{Introduction} \label{sec:introduction}

We study a novelty detection problem  where we have access to a null sample $\Dnull = \{\Xbs_i\}_{i=1}^{n}$ of independent data points generated under a distribution $P_0$ on a space $\mathcal{X}$, and a test sample $\Dtest =\{\Xbs_{n+j}\}_{j=1}^{m}$ where $\Xbs_{n+j}\sim P_j$ for {distributions} $P_j$ on the space $\mathcal{X}$, for $j=1,\dots,m$. We call each $\Xbs_{n+j}$ a test point, and consider specifically the case where $\mathcal{X}$ is the space of locally finite point configurations on a bounded domain. There the aim is to test which observations in $\Dtest$ are novelties, i.e. $P_j \neq P_0$, for $j=1,\dots,m$, while controlling for false positives that can be extremely harmful in practice.

This setting is common in spatial statistics where we encounter replicated point pattern data \citep{Diggle2000:Comparison}. Examples are locations of pyramidal neurons \citep{Diggle1991:Analysis}, cells \citep{Baddeley1993:Analysis}, epidermal nerve fibers \citep{Konstantinou2023:Statistical}, sweat glands \citep{Kuronen2021:Point}, and more \citep{Baddeley2015:Spatial}. For instance the sweat gland data studied by \cite{Kuronen2021:Point} consisting in five point patterns for each of the following three groups of subjects: diagnosed with neuropathy, suspected to have neuropathy, control (assumed to not have neuropathy). In this case, we would be interested in deriving a multiple testing procedure for the subjects suspected to have neuropathy, based on the data from the control subjects and the diagnosed subjects.

Our setting can be compared to the ones of~\cite{Gandy2014:MMCTest,Gandy2017:Quick}, and \cite{Hahn2020:Optimal}, although we differ in part due to our focus on spatial data. There, for each $j$, testing for $P_j \neq P_0$ is done by Monte Carlo tests which are based on a null sample constructed from independent simulations under $P_0$, see \cite{Baddeley2015:Spatial} and \cite{Myllymaki2017:Global}. This results in independent \pvals{}, and procedures such as the Hochberg procedure by \cite{Hochberg1988:Sharper} or the Benjamini-Hochberg procedure by \cite{Benjamini1995:FDR} can be used to ensure bounds on the \gls{fwer} or the \gls{fdr}, respectively. However, this comes with a significant computational strain, in particular if the number of test points $m$ is large and simulation under $P_0$ difficult, as for each $j=1,\ldots,m$ we need to have access to a new null sample. {We refer to this as the \textit{naïve multiple Monte Carlo testing} approach.}

Adopting another approach, \cite{Bates2023:Testing} and \cite{Marandon2024:Adaptive} re-use simulations under $P_0$ for different test points thereby resulting in dependent \pvals{} which in general cannot control the \gls{fwer} or the \gls{fdr}. In particular, \cite{Bates2023:Testing} introduced conformal \pvals{}, and proved that they are \gls{prds}, controlling the \gls{fdr} under the \gls{bh} procedure \citep{Benjamini2001:Dependency}.
Further, they proved that the \gls{bh} procedure with Storey's correction \citep{Storey2002:Direct, Storey2004:Strong} also controls the \gls{fdr}.

Inspired by this approach, we propose a \textit{conformal multiple Monte Carlo testing} procedure {that uses the same null sample for multiple Monte Carlo tests, instead of the naïve approach that would use disjoint null samples for each test.}
% based on a single null sample.
% Our method significantly reduces the computational burden with respect to the v{naïve} multiple Monte Carlo approach.
It is based on conformal \pvals{} computed for each test point in $\Dtest$ which are then combined by Storey's \gls{bh} procedure (resp.~the Hochberg procedure). Using recent advances of conformal novelty detection in \cite{Bates2023:Testing} and \cite{Marandon2024:Adaptive}, we prove that the methods control the \gls{fdr} (resp.~the \gls{fwer}) and present a rigorous procedure for testing goodness-of-fit for replicate point patterns. In doing so we extend the \gls{get} in \cite{Myllymaki2017:Global} to replicate point patterns, maintaining a graphical interpretation, while highlighting connections between conformal methodology and Barnard's Monte Carlo test \citep{Barnard1963:Discussion}.
On simulated data and a dataset from biology on the position of sweat glands among patients with neuropathy, we illustrate that {with a given simulation budget} our method significantly improves the power of testing procedures for replicate point patterns, compared to a naïve usage of \cite{Myllymaki2017:Global}.

The rest of the paper is organized as follows. Section~\ref{sec:background} presents preliminaries on multiple testing as well as the recent developments surrounding conformal \pvals{}, and we present in Section~\ref{sec:controlling_FWER} our contribution to \gls{fwer} control with conformal \pvals{}. Then, Section~\ref{sec:spatialdata} outlines how conformal \pvals{} can be used for multiple testing in the context of spatial point processes, herein relating conformal novelty detection to the modern \gls{get} in the domain of spatial statistics. Section~\ref{sec:simulation} presents simulation studies illustrating the power of the proposed methodology, and in Section~\ref{sec:real_data}, we apply it to the sweat gland data from \cite{Kuronen2021:Point}. Finally, Section~\ref{sec:conclusion} concludes the paper and presents some future research directions.

\section{Background}\label{sec:background}

\subsection{Multiple hypothesis testing}\label{sec:multiplehyptesting}
We consider the multiple hypothesis testing scenario in which $m$ test points are observed and a null hypothesis is tested on each of them. In other words, $m$ null hypotheses $H^{1}_0,\dots, H^{m}_0$ are tested at level $q^*\in(0, 1)$, and we denote by $p_1, \dots, p_m$ the corresponding \pvals{}.

We denote by $S$ the number of rejected non-true null hypotheses, and by $V$ the number of rejected true null hypotheses (type I error). Moreover, $\mathcal{R}=V+S$ denotes the total number of rejected hypotheses, while $m_0$ denotes the number of true null hypotheses.

A classical error rate studied for multiple comparisons is the \gls{fdr} criterion, proposed by \cite{Benjamini1995:FDR}, which is defined as $\mathrm{FDR}=\Ebb[\mathrm{FDP}]$, where $\mathrm{FDP}~=~V/\max(1, \mathcal{R})$. The power of a test, also called the \gls{tdr}, is defined as $\Ebb[{\rm TDP}]$, where the true discovery proportion is ${\rm TDP} = S/\max(1, m-m_0)$. The most known method to control the \gls{fdr} is the \gls{bh} procedure presented below.

\begin{definition}[\gls{bh} procedure \citep{Benjamini1995:FDR}]\label{def:BH}
    Let $p_{(1)}\leq p_{(2)} \leq \cdots \leq p_{(m)}$ be the ordered \pvals{} and denote by $H^{(j)}_0$ the null hypothesis associated to $p_{(j)}$. The \gls{bh} procedure with level $q^*\in(0,1)$ is to reject $H^{(1)}_0,\dots, H^{(k)}_0$ where 
    \begin{equation*}
        k = \max_{j\in\{1,\dots,m\}}\Big\{p_{(j)}\leq \frac{j}{m}q^*\Big\}.
    \end{equation*}
\end{definition}

When the \pvals{} are independent, the \gls{bh} procedure controls the \gls{fdr} at level $\pi_0 q^*$ where $\pi_0= m_0/m$ is the proportion of true nulls, see \cite{Benjamini1995:FDR}. We present below a more general condition under which the \gls{bh} procedure also controls the \gls{fdr} at level $\pi_0 q^*$, namely \gls{prds}.
\begin{definition}[PRDS \citep{Benjamini2001:Dependency}]
    A random vector $\Xbs=(X_i)_{i=1\dots m}$ is PRDS on $\mathcal{H}_0\subset\{1,\dots,m\}$ if for any increasing set {$\mathcal{D}\subset\Rbb^m$  (see below)}, and for each $i \in \mathcal{H}_0$, the function $\Pbb(\Xbs \in \mathcal{D} | X_i = x)$ is nondecreasing in $x$.
\end{definition}
{A set $\mathcal{D}\subset\Rbb^m$ is increasing if $(x_1,\dots,x_m) \in \mathcal{D}$ and $x_i\leq y_i$ for all $i=1\dots m$ implies $(y_1,\dots,y_m) \in \mathcal{D}$.}
% , where $\xbs \preceq \bs{y}$ holds if $x_i\leq y_i$ for all $i=1\dots m$.}
%
\begin{theorem}[Theorem 1.2 \citep{Benjamini2001:Dependency}]\label{theorem:BH_PRDS}
    If the joint distribution of \pvals{} is PRDS on $\mathcal{H}_0 = \{j\in\{1,\dots,m\}\;\mathrm{s.t.}\;H_0^{j}\;\mathrm{is}\;\mathrm{true}\}$, then the \gls{bh} procedure applied with level $q^*\in(0,1)$ controls the \gls{fdr} at level less than or equal to $\pi_0 q^*$:
    \begin{equation*}
         {\rm FDR} = \Ebb\bigg(\frac{V}{\max(1, \mathcal{R})}\bigg) \leq \pi_0 q^* \leq q^*.
    \end{equation*}
\end{theorem}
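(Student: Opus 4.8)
The plan is to follow the argument of \citet{Benjamini2001:Dependency}. Write $t_k := kq^*/m$ for the effective rejection thresholds, so that on $\{\mathcal{R} = k\}$ with $k \ge 1$ the rejected hypotheses are exactly those with \pval{} at most $t_k$, a set of cardinality $k$. The first step is the pointwise identity
\begin{equation*}
\frac{V}{\max(1,\mathcal{R})} = \sum_{i\in\mathcal{H}_0}\sum_{k=1}^{m}\frac{1}{k}\,\mathbf{1}\{p_i \le t_k,\ \mathcal{R} = k\},
\end{equation*}
verified by conditioning on the value of $\mathcal{R}$ (the case $\mathcal{R} = 0$, where $V = 0$, contributing zero on each side). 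Taking expectations and exchanging the finite sum with expectation gives $\mathrm{FDR} = \sum_{i\in\mathcal{H}_0}\sum_{k=1}^{m} k^{-1}\Pbb(p_i \le t_k,\ \mathcal{R} = k)$, so it is enough to establish the single-index bound $\sum_{k=1}^{m} k^{-1}\Pbb(p_i \le t_k, \mathcal{R} = k) \le q^*/m$ for each $i\in\mathcal{H}_0$; summing over the $m_0$ true nulls and using $\pi_0 = m_0/m$ then yields $\mathrm{FDR}\le\pi_0 q^* \le q^*$.

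For the single-index bound the obstacle is that $\{\mathcal{R} = k\}$ is not a monotone event, so \gls{prds} cannot be applied to it directly; I would circumvent this by passing to super-level sets. Since lowering any \pval{} can only enlarge the \gls{bh} rejection set, $\mathcal{R}$ (its size) is a coordinatewise non-increasing function of $(p_1,\dots,p_m)$, hence every $\{\mathcal{R} \ge k\}$ is a decreasing set. Writing $\{\mathcal{R} = k\} = \{\mathcal{R} \ge k\}\setminus\{\mathcal{R} \ge k+1\}$ and intersecting with $\{p_i \le t_k\}$ (nested events) gives $\Pbb(p_i \le t_k, \mathcal{R} = k) = \Pbb(p_i \le t_k, \mathcal{R} \ge k) - \Pbb(p_i \le t_k, \mathcal{R} \ge k+1)$. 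I would then use the standard consequence of \gls{prds}: for $i\in\mathcal{H}_0$ and any decreasing set $D\subset\Rbb^m$, the map $t\mapsto\Pbb((p_1,\dots,p_m)\in D\mid p_i \le t)$ is non-increasing --- this follows by averaging the defining inequality (that $x\mapsto\Pbb(\cdot\mid p_i = x)$ is non-increasing for decreasing sets, which is \gls{prds} applied to the increasing set $D^c$) against the distribution of $p_i$ restricted to $\{p_i \le t\}$.

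Putting $\beta_k := \Pbb(\mathcal{R} \ge k \mid p_i \le t_k)$ (we may assume $\Pbb(p_i \le t_k) > 0$, else the $k$-th term vanishes), the nesting $\{\mathcal{R} \ge k\}\supseteq\{\mathcal{R} \ge k+1\}$ together with monotonicity in the conditioning level ($t_k \le t_{k+1}$) gives $\beta_k \ge \Pbb(\mathcal{R} \ge k+1 \mid p_i \le t_k) \ge \beta_{k+1}$, so $(\beta_k)$ is non-increasing and $\Pbb(p_i \le t_k, \mathcal{R} \ge k+1) \ge \Pbb(p_i \le t_k)\,\beta_{k+1}$. Combining this with $\Pbb(p_i \le t_k, \mathcal{R} \ge k) = \Pbb(p_i \le t_k)\,\beta_k$, the super-uniformity $\Pbb(p_i \le t_k) \le t_k = kq^*/m$, and $\beta_k \ge \beta_{k+1}$, one obtains $k^{-1}\Pbb(p_i \le t_k, \mathcal{R} = k) \le (q^*/m)(\beta_k - \beta_{k+1})$; summing over $k$ and telescoping yields
\begin{equation*}
\sum_{k=1}^{m}\frac{1}{k}\,\Pbb(p_i \le t_k, \mathcal{R} = k) \le \frac{q^*}{m}\sum_{k=1}^{m}(\beta_k-\beta_{k+1}) = \frac{q^*}{m}(\beta_1 - \beta_{m+1}) \le \frac{q^*}{m},
\end{equation*}
since $\beta_1 \le 1$ and $\beta_{m+1} = 0$ (as $\mathcal{R}\le m$). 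This is the desired single-index bound, which completes the argument.

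The step I expect to take the most care is the opening one: making the identity for $V/\max(1,\mathcal{R})$ fully rigorous --- handling $\mathcal{R} = 0$, and, when the \pvals{} are not assumed continuous, checking that the \gls{bh} rejection set is still $\{j : p_j \le t_{\mathcal{R}}\}$ with exactly $\mathcal{R}$ elements --- and, in the single-index step, confirming carefully that each $\{\mathcal{R} \ge k\}$ is genuinely a decreasing set (so that \gls{prds} applies) and that the implied monotonicity of $t\mapsto\Pbb(\cdot\mid p_i\le t)$ is correctly derived from the conditional-on-$\{p_i = x\}$ definition.
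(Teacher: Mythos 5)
This theorem is quoted from the literature: the paper gives no proof of its own, citing Theorem 1.2 of \citep{Benjamini2001:Dependency}. Your argument is a correct reconstruction of that original proof and follows essentially the same route --- the decomposition $V/\max(1,\mathcal{R})=\sum_{i\in\mathcal{H}_0}\sum_k k^{-1}\mathbf{1}\{p_i\le t_k,\mathcal{R}=k\}$, the fact that the \gls{bh} rejection set on $\{\mathcal{R}=k\}$ is exactly $\{j:p_j\le t_k\}$, and the \gls{prds}-based monotonicity of $t\mapsto\Pbb(\mathcal{R}\ge k\mid p_i\le t)$ feeding a telescoping bound of $q^*/m$ per true null (your use of $\{\mathcal{R}\ge k\}$ in place of Benjamini--Yekutieli's events ``exactly $k$ rejected including $H_0^i$'' is equivalent, since on $\{p_i\le t_k\}$ these coincide). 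The only implicit ingredient worth stating explicitly is super-uniformity of the null \pvals{}, $\Pbb(p_i\le t)\le t$ for $i\in\mathcal{H}_0$, which is part of what is meant by valid \pvals{} and is exactly where the factor $t_k=kq^*/m$ enters.
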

To avoid being overly conservative, we would need to know $\pi_0$ and set $q^* = \alpha/\pi_0$ to have global significance level $\alpha \in (0, 1)$.
However, in practice $\pi_0$ is not known, and several estimation procedures have been proposed (see e.g. \cite{Storey2002:Direct}, \cite{Storey2004:Strong} and \cite{Benjamini2006:Adaptive}), without one being constantly better. We present below the Storey estimator as it adapts well in our conformal inference setting (see Theorem~\ref{theorem:conformal_Storey}).
\begin{definition}[Storey's estimator \citep{Storey2004:Strong}]\label{def:storey}
    The Storey estimator of the proportion of true nulls $\pi_0$ with parameter $\lambda\in {(0,1)}$ is 
    \begin{equation*}
        \hat{\pi}_0 = \frac{1+ \sum_{i=1}^m\mathbbm{1}[p_i > \lambda]}{m(1-\lambda)}.
    \end{equation*}
\end{definition}
The idea is that the \pvals{} contain information regarding the number of true null hypotheses among the $m$ hypotheses. The hyper-parameter $\lambda$ has a classic influence on $\hat{\pi}_0$ in terms of bias-variance trade-off. A bootstrap procedure for choosing the optimal $\lambda$ is proposed by  \cite{Storey2002:Direct} in the case of independent \pvals{}, but in general, the \gls{bh} procedure used with an estimator of $\pi_0$ does not control the \gls{fdr} at the prescribed level in case of PRDS \pvals{}. We present in Section~\ref{sec:conformal_outlier_detection} how conformal \pvals{}, used in combination with Storey's estimator from Definition~\ref{def:storey} allows to control the \gls{fdr}.

A different error criterion for multiple testing which is more strict than \gls{fdr} is the \gls{fwer}, defined as $\Pbb(V \geq 1)$. The traditional Bonferroni procedure for controlling the \gls{fwer} is to reject any hypothesis $H^{j}_0$, $j=1,\dots,m$, for which the corresponding \pval{} $p_j \leq q^*/m$ for significance level $q^*\in(0,1)$ \citep{Bonferroni1936:Teoria}. Equipped with a liberal estimator of $\hat{m}_0$, for instance Storey's estimator, a sharper Bonferroni procedure is to reject when $p_j \leq q^*/\hat{m}_0$. Multiple testing procedures for controlling the \gls{fwer}, which are uniformly more powerful than the traditional Bonferroni procedure, have been proposed under certain dependency requirements, for instance the Hochberg procedure presented below \citep{Simes1986:Improved,Hochberg1988:Sharper}.
\begin{definition}[Hochberg procedure \citep{Hochberg1988:Sharper}]\label{def:Hochberg}   
    Let $p_{(1)}\leq p_{(2)} \leq \cdots \leq p_{(m)}$ be the ordered \pvals{} and denote by $H^{(j)}_0$ the null hypothesis associated to $p_{(j)}$. The Hochberg procedure with level $q^*\in(0,1)$ is to reject $H^{(1)}_0,\dots, H^{(k)}_0$ where 
    \begin{equation*}
        k = \max_{j\in\{1,\dots,m\}}\Big\{p_{(j)}\leq \frac{q^*}{m-j+1}\Big\}.
    \end{equation*}
\end{definition}

When the \pvals{} are independent, or when their joint distribution has a specific type of positive dependence called multivariate totally positive of order two (${\rm MTP}_2$, see Definition~\ref{def:mtp2}){, the Hochberg procedure controls the \gls{fwer} at level $q^*\in(0,1)$} \citep{Sarkar1997:Simes,Sarkar1998:Probability}.

\subsection{Conformal p-values}\label{sec:conformal_outlier_detection}
A conformal inference approach to novelty detection has been studied in recent works under different names, see \cite{Bates2023:Testing,Mary2022:Semi,Marandon2024:Adaptive}.
The central assumption there is that the random variables in $\Dnull$ are exchangeable as defined below. 
\begin{definition}[Exchangeability]\label{def:exchangeability}
    The random variables $\Xbs_1, \dots, \Xbs_n$ are exchangeable if their joint distribution is invariant by permutations.
\end{definition}

Let $\hats$ be a data driven score function mapping from $\mathcal{X}\times\mathcal{X}^{n+m}$ to a {real set}. The space $\mathcal{X}$ is traditionally $\mathbb{R}^d$ or $\mathbb{R}^{d\times d'}$, see for instance \cite{Marandon2024:Adaptive}, but in this work $\mathcal{X}$ will be the space of locally finite point configurations on a bounded domain. As usual in the conformal framework, this score is data driven and we will be particularly interested, for each $\Xbs_i\in\Dnull\cup\Dtest$, in the evaluation of $\hats_i = \hats(\Xbs_i, (\Xbs_{1}, \dots, \Xbs_{n+m}))$.
We denote by $\Hr_0 = \{j\in\{1,\dots,m\}\;\mathrm{s.t.}\;H_0^{j}\;\mathrm{is}\;\mathrm{true}\}$ the set of nulls in the test sample $\Dtest$, $\Hr_1 = \{1,\dots,m\}\setminus\Hr_0$, and make the following assumptions on the score and the data:
\begin{enumerate}[label=({$\mathcal{A}$\arabic*})]
    \item \label{A:invariance} for any permutation $\pi$ of $\{1,\dots,n+m\}$, $\hats$ satisfies for any $(\Xbs_{1}, \dots, \Xbs_{n+m})\in\mathcal{X}^{n+m}$ the invariance property $\hats(\cdot, (\Xbs_{\pi(1)}, \dots, \Xbs_{\pi(n+m)})) = \hats(\cdot, (\Xbs_{1}, \dots, \Xbs_{n+m}))\,;$
    \item $(\Xbs_1,\dots,\Xbs_n,(\Xbs_{n+i})_{i\in\Hr_0})$ are exchangeable conditional on $(\Xbs_{n+i})_{i\in\Hr_1}$;\label{A:exchangeability}
    \item there are almost surely no ties among $(\hats_1, \ldots, \hats_{n+m})$, i.e., $\mathbb{P}(\hats_i < \hats_j \lor \hats_j < \hats_i) = 1$ for all $i\neq j$.\label{A:notie}
\end{enumerate}
For $j=1,\ldots, m$, we define the conformal \pval{} associated to the test point $\Xbs_{n+j}$ by
\begin{equation}\label{eq:marginal_conformal_p_values_continuous}
    \hat{p}_j = \frac{1}{n+1} \Big(1 + \sum_{i={1}}^{n} \mathbbm{1}[\hats_i \leq \hats_{n+j}]\Big).
\end{equation}
The conformal framework can be extended to conformal scores $\hats$ taking values in a non-scalar space equipped with an ordering $\preceq$. Then, the p-values can be defined as \eqref{eq:marginal_conformal_p_values_continuous} with $\leq$ replaced by $\preceq$ \citep{Myllymaki2017:Global}.

Notice that \cite{Marandon2024:Adaptive} exposes a more general case where the score function is learned: the null sample $\Dnull$ is then split in two parts $(\Xbs_1,\dots,\Xbs_\ell)$ and $(\Xbs_{\ell+1},\dots,\Xbs_{n+m})$, where the former is used only for learning $\hats$, and the invariance assumption \ref{A:invariance} becomes
\begin{equation*}
        \hats(\cdot, (\Xbs_{1},\dots,\Xbs_{\ell}), (\Xbs_{\pi(\ell+1)}, \dots, \Xbs_{\pi(n+m)})) = \hats(\cdot, (\Xbs_{1}, \dots, \Xbs_{\ell}), (\Xbs_{\ell+1},\dots,\Xbs_{n+m}))\,,
\end{equation*} 
and the conformal \pvals{} are given by $\hat{p}_j = \frac{1}{n-\ell+1} \big(1+\sum_{i=\ell+1}^{n} \mathbbm{1}[\hats_i \leq \hats_{n+j}]\big)$.
In Section~\ref{sec:spatialdata}, we view conformal novelty detection in the context of \gls{get} for replicate point patterns, and here we do not need $\ell > 0$, so for the sake of readability we consider hereinafter $\ell=0$.
A less general framework was initially considered in \cite{Bates2023:Testing} where the conformal scores were assumed to depend only on $\Xbs_{1},\dots,\Xbs_{\ell}$. For this case they showed that the conformal \pvals{} are PRDS, which implies that applying \gls{bh} procedure (Definition~\ref{def:BH}) controls the \gls{fdr}.
% {As shown
% first by \cite{Bates2023:Testing} for $\ell =0$ and extended to the case $\ell\geq 0$
% by \cite{Marandon2024:Adaptive}}, conformal \pvals{} from \eqref{eq:marginal_conformal_p_values_continuous} are PRDS, which implies that applying \gls{bh} procedure (Definition~\ref{def:BH}) controls the \gls{fdr}.
For the general case, \cite{Marandon2024:Adaptive} have found lower and upper bounds for \gls{fdr}, as stated in the theorem below.
\begin{theorem}[\gls{fdr} control \citep{Marandon2024:Adaptive}]\label{thm:FDR_control}
    Let assumptions~\ref{A:invariance}-\ref{A:notie} be true for a conformal score $\hats$. The \gls{bh} procedure (see Definition~\ref{def:BH}) at level $q^*$ with the conformal \pvals{} satisfies
    \begin{equation*}
        m_0 \frac{\lfloor q^* (n{+}1) / m \rfloor}{(n{+}1)} \leq \Ebb\left(\frac{V}{\max(1, \mathcal{R})}\right) \leq \pi_0 q^*,
    \end{equation*}
    and
    \begin{equation*}
        \Ebb\left(\frac{V}{\max(1, \mathcal{R})}\right) = \pi_0 q^*,
    \end{equation*}
    when $q^* (n+1)/m$ is an integer.
\end{theorem}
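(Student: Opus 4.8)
The plan is to control the \gls{fdr} along the classical lines for the \gls{bh} procedure — decompose the false discovery proportion over the true nulls and treat one null at a time — but with the conditioning adapted to the conformal construction. Write $c:=q^{*}(n+1)/m$ and $t_{0}:=\lfloor c\rfloor/(n+1)$, so $t_{0}$ is the largest point of the grid $\{0,\tfrac1{n+1},\dots,1\}$ not exceeding $q^{*}/m$, with $t_{0}=q^{*}/m$ exactly when $c$ is an integer. For the upper bound, in the special case where $\hats$ depends only on a held-out part of $\Dnull$ the conformal \pvals{} are \gls{prds} and $\Ebb[{\rm FDP}]\le\pi_{0}q^{*}$ is immediate from Theorem~\ref{theorem:BH_PRDS}; in general I would argue directly. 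Using that the \gls{bh} procedure (Definition~\ref{def:BH}) is step-up, write ${\rm FDP}=\sum_{j\in\Hr_{0}}\mathbbm{1}[\hat p_{j}\le \mathcal{R}_{j}q^{*}/m]/\mathcal{R}_{j}$, where $\mathcal{R}_{j}\ge1$ is the number of rejections obtained when $\hat p_{j}$ is replaced by $0$; this uses the standard facts that $\mathcal{R}=\mathcal{R}_{j}$ on $\{H_{0}^{j}\text{ rejected}\}$ and that $H_{0}^{j}$ is rejected if and only if $\hat p_{j}\le \mathcal{R}_{j}q^{*}/m$. Then fix $j\in\Hr_{0}$ and condition on the $\sigma$-algebra $\mathcal{E}_{j}$ generated by the multiset $\{\Xbs_{1},\dots,\Xbs_{n},\Xbs_{n+j}\}$ together with the points $(\Xbs_{n+i})_{i\ne j}$. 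By \ref{A:invariance} the vector of scores is a fixed function of the multiset of all $n+m$ data points, hence determined by $\mathcal{E}_{j}$ up to the labelling of the indices $\{1,\dots,n,n+j\}$; by \ref{A:exchangeability} this labelling is uniform given $\mathcal{E}_{j}$, so by \ref{A:notie} the rank $L$ of $\hats_{n+j}$ among $\hats_{1},\dots,\hats_{n},\hats_{n+j}$ is uniform on $\{1,\dots,n+1\}$ and, by \eqref{eq:marginal_conformal_p_values_continuous}, $\hat p_{j}=L/(n+1)$. The crux is that a direct computation shows each remaining \pval{} $\hat p_{i}$, $i\ne j$, to be a non-decreasing step function of $L$ (as $\hats_{n+j}$ moves up through the pooled ordering it vacates low slots, so more calibration scores fall below each other test score), whence $\mathcal{R}_{j}$ is a non-increasing function $\rho(L)$ while $\hat p_{j}=L/(n+1)$ is increasing. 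Consequently $\{H_{0}^{j}\text{ rejected}\}=\{L\le\ell^{*}\}$ for an $\mathcal{E}_{j}$-measurable $\ell^{*}$, on this set $\rho(L)\ge\rho(\ell^{*})\ge\ell^{*}/c$, and therefore $\Ebb[\mathbbm{1}[H_{0}^{j}\text{ rejected}]/\mathcal{R}\mid\mathcal{E}_{j}]=\tfrac1{n+1}\sum_{L=1}^{\ell^{*}}1/\rho(L)\le\tfrac1{n+1}\cdot\ell^{*}\cdot(c/\ell^{*})=q^{*}/m$; summing over the $m_{0}$ true nulls gives $\Ebb[{\rm FDP}]\le m_{0}q^{*}/m=\pi_{0}q^{*}$.

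For the lower bound, note that since $t_{0}\le q^{*}/m$ and $\mathcal{R}_{j}\ge1$, every true null with $\hat p_{j}\le t_{0}$ is automatically rejected; writing $V_{0}:=\#\{j\in\Hr_{0}:\hat p_{j}\le t_{0}\}$, I would prove the (non-pointwise) inequality $\Ebb[V/\max(1,\mathcal{R})]\ge\Ebb[V_{0}]$. Since each conformal null \pval{} is exactly uniform on the grid — again a consequence of \ref{A:exchangeability} and \ref{A:notie} — one has $\Pbb(\hat p_{j}\le t_{0})=t_{0}$ for $j\in\Hr_{0}$, so $\Ebb[V_{0}]=m_{0}t_{0}=m_{0}\lfloor q^{*}(n+1)/m\rfloor/(n+1)$, which is the asserted bound. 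For $\Ebb[V/\max(1,\mathcal{R})]\ge\Ebb[V_{0}]$ I would again condition on $\mathcal{E}_{j}$: with $\ell^{*}\ge\lfloor c\rfloor$ and $\rho$ non-increasing, the conditional mean $\Ebb[\mathbbm{1}[H_{0}^{j}\text{ rejected}]/\mathcal{R}\mid\mathcal{E}_{j}]=\tfrac1{n+1}\sum_{L\le\ell^{*}}1/\rho(L)$ need not dominate $t_{0}$ realisation by realisation, but its expectation over $\mathcal{E}_{j}$ does, because a large value of $\rho$ forces several of the other null test scores to be small and hence $V_{0}$ to be large. Finally, when $c=q^{*}(n+1)/m$ is an integer one has $t_{0}=q^{*}/m$, so the lower and upper bounds coincide and pin $\Ebb[V/\max(1,\mathcal{R})]$ to $\pi_{0}q^{*}$.

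The step that breaks away from the textbook \gls{bh} argument, and which I expect to be the main obstacle, is that $\hat p_{j}$ and $\mathcal{R}$ (equivalently $\mathcal{R}_{j}$) are not independent here: $\hats_{n+j}$ feeds into $\hat p_{j}$ and, through the data-driven score function, into every other score. The remedy is the precise choice of $\mathcal{E}_{j}$ together with the monotonicity of the other \pvals{} in the rank of $\hats_{n+j}$, which converts this dependence into a favourable, negative one and makes the conditional bound $q^{*}/m$ available for the upper bound; the lower bound is more delicate still, because there the analogous conditional statement fails pointwise and one must carry out the averaged, global comparison indicated above.
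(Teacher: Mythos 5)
This theorem is not proved in the paper at all: it is quoted from \cite{Marandon2024:Adaptive}, so there is no in-paper argument to match your proposal against; I can only assess the proposal on its own merits. Your upper-bound half is correct and is essentially the standard leave-one-out route used in the cited literature: conditioning on the bag $\{\Xbs_1,\dots,\Xbs_n,\Xbs_{n+j}\}$ and the other test points, \ref{A:invariance}--\ref{A:notie} do give that the rank $L$ of $\hats_{n+j}$ is uniform on $\{1,\dots,n+1\}$ with $\hat p_j=L/(n+1)$; the observation that each $\hat p_i$, $i\neq j$, is non-decreasing in $L$ (because raising $a_{(L)}$ can only remove a calibration score from below $\hats_{n+i}$) makes $\rho(L)=\mathcal{R}_j$ non-increasing, the rejection event a down-set $\{L\le \ell^*\}$, and the bound $\rho(L)\ge\rho(\ell^*)\ge \ell^* m/(q^*(n+1))$ then yields the conditional bound $q^*/m$ and hence $\Ebb[\mathrm{FDP}]\le\pi_0q^*$. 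This part is sound.

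The genuine gap is the lower bound, and with it the equality case, since you obtain the latter only by sandwiching. You correctly reduce the claim to $\Ebb[V/\max(1,\mathcal{R})]\ge \Ebb[V_0]=m_0\lfloor q^*(n+1)/m\rfloor/(n+1)$, and you correctly note that the per-realisation conditional bound fails (indeed one can exhibit configurations, e.g.\ $\rho(1)=2$, $\rho(2)=1$ with $c=q^*(n+1)/m=1.5$, where $\frac{1}{n+1}\sum_{L\le\ell^*}1/\rho(L)<\lfloor c\rfloor/(n+1)$). But the proposed remedy --- ``a large value of $\rho$ forces several of the other null test scores to be small and hence $V_0$ to be large'' --- is not an argument, and it cannot be the right mechanism: the inequality must hold for $m_0=1$, where there are no other null test points at all, so the compensation has to come from averaging the joint behaviour of $(\hat p_j,\mathcal{R})$ over realisations of your $\mathcal{E}_j$ (large $\mathcal{R}$ raises the rejection threshold $q^*\mathcal{R}/m$ for the null in proportion, so the rejection probability scales with $\mathcal{R}$ and $1/\mathcal{R}$ is recovered in expectation). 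Making this precise is exactly where the cited proof does its real work, via an exact distributional computation for conformal \pvals{} rather than a pointwise monotonicity bound; as written, your lower bound asserts the statement to be proved (the intermediate inequality $\Ebb[V/\max(1,\mathcal{R})]\ge\Ebb[V_0]$ is equivalent to it) and justifies it only heuristically. Consequently the claimed equality when $q^*(n+1)/m$ is an integer is also not established by the proposal.
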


Moreover, \cite{Marandon2024:Adaptive} showed that the \gls{fdr} is controlled in the case of Storey's correction, which we state in Theorem \ref{theorem:conformal_Storey} below.
\begin{theorem}[Storey's \gls{bh} with conformal \pvals{} \citep{Marandon2024:Adaptive}]\label{theorem:conformal_Storey}
    Let $\hat{p}_j$, for $j=1,\ldots,m$, be conformal \pvals{} given by~\eqref{eq:marginal_conformal_p_values_continuous} with a conformal score verifying~\ref{A:invariance}-\ref{A:notie}. Let further $\hat{\pi}_0$ be Storey's estimator as in Definition~\ref{def:storey} {with $\lambda=K/(n+1)$ for $K \in \{1, \dots, n\}$}. Then, the \gls{bh} procedure, as in Definition~\ref{def:BH}, applied at level $q^* = \alpha/\hat{\pi}_0$ to $\hat{p}_1,\ldots,\hat{p}_m$ controls the \gls{fdr} at the prescribed significance level $\alpha \in (0, 1)$.
\end{theorem}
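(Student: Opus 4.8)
The plan is to follow \cite{Marandon2024:Adaptive} and carry the martingale argument of \cite{Storey2004:Strong} over to the finite, discrete, exchangeable setting of conformal \pvals{}. First I would recast Storey's \gls{bh} rule as a single data-dependent threshold: writing $\mathcal{R}(t)=\sum_{j=1}^{m}\mathbbm{1}[\hat p_j\le t]$, applying \gls{bh} at level $q^{*}=\alpha/\hat\pi_0$ is the same as rejecting $H^{j}_0$ for every $j$ with $\hat p_j\le\hat t$, where $\hat t$ is the $k$-th smallest \pval{} for the \gls{bh} index $k$ (and $\hat t=0$, no rejections, if $k=0$); by definition of $k$, on $\{\hat t>0\}$ one has $\mathcal{R}(\hat t)\ge m\hat\pi_0\hat t/\alpha\ge 1$ (ties among \pvals{}, which do occur here, only help this bound). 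With $V(t)=\sum_{j\in\Hr_0}\mathbbm{1}[\hat p_j\le t]$ and $\mathrm{FDP}=0$ on $\{\hat t=0\}$,
\begin{equation*}
\mathrm{FDR}=\Ebb\!\Big[\frac{V(\hat t)}{\mathcal{R}(\hat t)}\,\mathbbm{1}[\hat t>0]\Big]\le\frac{\alpha}{m}\,\Ebb\!\Big[\frac{V(\hat t)}{\hat\pi_0\,\hat t}\,\mathbbm{1}[\hat t>0]\Big].
\end{equation*}
Since $\lambda=K/(n+1)$ lies on the grid $\{1/(n+1),\dots,(n+1)/(n+1)\}$ supporting the conformal \pvals{}, the event $\{\hat p_j>\lambda\}$ is ``clean'' and $m(1-\lambda)\hat\pi_0=1+\sum_{j=1}^{m}\mathbbm{1}[\hat p_j>\lambda]\ge 1+V_{>\lambda}$ with $V_{>\lambda}:=\sum_{j\in\Hr_0}\mathbbm{1}[\hat p_j>\lambda]=m_0-V(\lambda)$. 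Substituting, it remains to prove $(1-\lambda)\,\Ebb\big[V(\hat t)\big/\big((1+m_0-V(\lambda))\hat t\big)\,\mathbbm{1}[\hat t>0]\big]\le 1$.

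Next I would condition on $\mathcal{G}_0:=\sigma\big(\{\Xbs_1,\dots,\Xbs_{n+m}\}\text{ as a multiset},\,(\Xbs_{n+j})_{j\in\Hr_1}\big)$ and set $\mathcal{G}:=\mathcal{G}_0\vee\sigma(V(\lambda))$. By~\ref{A:invariance} the score depends on its second argument only through the pooled multiset, so conditionally on $\mathcal{G}_0$ each score is a fixed function of the data point occupying its slot; by~\ref{A:exchangeability} the null points are placed into the calibration and null-test slots by a uniform random permutation, hence the alternative scores are $\mathcal{G}_0$-measurable, $1+m_0-V(\lambda)$ is $\mathcal{G}$-measurable, and by~\ref{A:notie} all scores are a.s.\ distinct so the null \pvals{} $(\hat p_j)_{j\in\Hr_0}$ are exchangeable on the grid. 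In this conditional model $V(k/(n+1))$ is the number of null-test scores lying below the $k$-th calibration score, and a short urn computation (the sampling-without-replacement analogue of the binomial recursion in \cite{Storey2004:Strong}) shows that $t\mapsto V(t)/t$, indexed by $t$ running down the grid from $\lambda$, is a reverse-time martingale for $\sigma(V(u),\mathcal{R}(u):u\ge t)$, for which $\hat t$ is a stopping time. Optional stopping on the finite grid gives $\Ebb\big[V(\hat t)\hat t^{-1}\mathbbm{1}[\hat t>0]\mid\mathcal{G}\big]\le V(\lambda)/\lambda$, whence the quantity to be bounded is at most $\tfrac{1-\lambda}{\lambda}\,\Ebb\big[V(\lambda)/(1+m_0-V(\lambda))\big]$.

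To close I would bound this by $\lambda/(1-\lambda)$: conditionally on $\mathcal{G}_0$ the indicators $(\mathbbm{1}[\hat p_j\le\lambda])_{j\in\Hr_0}$ are negatively associated with marginal $\mathrm{Bernoulli}(\lambda)$ (again by the uniform placement and~\ref{A:notie}), so $V(\lambda)$ is dominated in convex order by $B\sim\mathrm{Bin}(m_0,\lambda)$; since $x\mapsto x/(1+m_0-x)$ is convex and nondecreasing on $[0,m_0]$ and $\Ebb[B/(1+m_0-B)]=\lambda(1-\lambda^{m_0})/(1-\lambda)\le\lambda/(1-\lambda)$ by the elementary evaluation of $\Ebb[1/(1+m_0-B)]$, we obtain $(1-\lambda)\,\Ebb\big[V(\hat t)\big/\big((1+m_0-V(\lambda))\hat t\big)\,\mathbbm{1}[\hat t>0]\big]\le 1$ and hence $\mathrm{FDR}\le\alpha$. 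The step I expect to be the real obstacle is the reverse-martingale/optional-stopping argument: because the score is transductive ($\ell=0$), the alternative \pvals{}---hence $\mathcal{R}(u)$ for $u>\hat t$ and the $\Hr_1$-part of $\hat\pi_0$---are coupled to the very same random labelling of the null scores that drives $V$, so one must check carefully that $V(t)/t$ is still a martingale and $\hat t$ still a stopping time for a filtration that also sees the alternative \pvals{}; this is precisely the technical content supplied by \cite{Marandon2024:Adaptive}, while the remaining ingredients (the threshold reformulation, the grid identity for $\hat\pi_0$, and the binomial identity) are routine. Since the statement is quoted from \cite{Marandon2024:Adaptive}, one may alternatively just invoke it.
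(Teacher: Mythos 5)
The paper itself does not prove this theorem: it is imported verbatim from \cite{Marandon2024:Adaptive}, so the benchmark is that paper's argument, which is a per-hypothesis leave-one-out exchangeability (swapping/counting) argument in which the $+1$ in Storey's estimator and the grid condition $\lambda=K/(n+1)$ are used to exchange the null test point with a calibration point; it does not establish any martingale property of $t\mapsto V(t)/t$. Your skeleton (threshold reformulation, $\mathcal{R}(\hat t)\geq m\hat\pi_0\hat t/\alpha$, the grid identity $m(1-\lambda)\hat\pi_0\geq 1+m_0-V(\lambda)$) is fine, but the two steps that carry all the weight are, respectively, unproven and wrong. First, the optional-stopping step needs $V(t)/t$ to be a backward martingale with respect to a filtration rich enough that $\hat t$ is a stopping time, i.e.\ one that also reveals $\mathcal{R}(u)$ and hence the alternative \pvals{}; conditionally on the pooled multiset these are functions of the very same random calibration assignment that drives $V$, so this is a genuinely stronger statement than the classical Storey--Taylor--Siegmund one. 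You flag this as ``the real obstacle'' and defer it to \cite{Marandon2024:Adaptive}, but that reference supplies no such result, so the central difficulty is both unresolved and mis-attributed.

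Second, the closing step is based on a false claim. Conditionally on $\mathcal{G}_0$ the null indicators $\mathbbm{1}[\hat p_j\leq\lambda]$, $j\in\Hr_0$, are \emph{positively} dependent, not negatively associated, because all test points share one calibration sample (consistent with Lemma 1 of \cite{Bates2023:Testing}, cited in Section~\ref{sec:controlling_FWER}). Concretely, take $n=1$, $m_0=2$, $m_1=0$, distinct null scores $a_1<a_2<a_3$, $\lambda=1/2$: then $V(\lambda)\in\{0,1,2\}$ with probability $1/3$ each, a mean-preserving spread of $\mathrm{Bin}(2,1/2)$, and ${\rm Cov}\big(\mathbbm{1}[\hat p_1\leq\lambda],\mathbbm{1}[\hat p_2\leq\lambda]\big)=1/12>0$. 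Consequently the convex-order domination of $V(\lambda)$ by $\mathrm{Bin}(m_0,\lambda)$ fails, and since $x\mapsto x/(1+m_0-x)$ is convex the comparison goes the wrong way (in the example $\Ebb[V(\lambda)/(1+m_0-V(\lambda))]=5/6>3/4$, the binomial value). So your final inequality is not established by this route; the cited proof avoids it by handling the Storey denominator jointly with the per-hypothesis rejection indicator through the leave-one-out swap, rather than separating it from $V(\hat t)/\hat t$. Simply invoking \cite{Marandon2024:Adaptive}, as the paper does, is legitimate; the reconstruction as written is not a proof.
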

We note that other adaptive \gls{bh} methods can be used with conformal p-values, for instance the method of \cite{Benjamini2006:Adaptive}, as proven in \cite{Marandon2024:Adaptive}. We restrict our attention in this work to Storey's estimator as this was initially proven to work with conformal p-values in \cite{Bates2023:Testing}.

\section{Controlling the FWER with conformal p-values}\label{sec:controlling_FWER}
While \gls{fdr} is a more common and popular multiple testing criterion, some situations still warrants the use of the \gls{fwer}, i.e., in cases where a practitioner is really interested in controlling the probability of making one or more false rejections.

\cite{Angelopoulos2024:Theoretical} highlighted through Proposition 10.2 that 
% a Bonferroni-type procedure
the \rev{$\check{\rm S}$id\'{a}k} procedure, i.e.,
rejecting hypothesis $j$ when ${\hat{p}}_j \leq 1 - (1-\alpha)^{1/m}$ for significance level $\alpha\in(0,1)$, controls the \gls{fwer} at the nominal level, and that asymptotically as the size of the null sample $n$ increases, the \gls{fwer} converges exactly to $\alpha$. Yet, it would be reasonable to expect that one could construct a more powerful procedure in the case of finite $n$ which does not use a constant threshold for all \pvals{}, but instead uses a sequence of thresholds on the ordered \pvals{}. 

It was shown by \cite{Gazin2024:Transductive} that the joint probability mass function of the conformal \pvals{} is
\begin{equation}\label{eq:joint_mass}
    \mathbb{P}(\hat{\bs{p}} = \bs{j}/(n+1)) = \frac{n!}{(n+m)!} \prod_{k=1}^{n+1} M_k(\bs{j})!,
\end{equation}
for $\bs{j} \in \{1,\dots,n+1\}^m$ and where $M_k(\bs{j}) = |\{i \in \{1,\dots,m\} : j_i = k\}|$.

We recall that if the joint distribution of \pvals{} is ${\rm MTP}_2$, then the Hochberg procedure, see {Definition}~\ref{def:Hochberg}, controls the \gls{fwer}. Since conformal \pvals{} are positively correlated (Lemma 1 of \cite{Bates2023:Testing}) and ${\rm MTP}_2$ is a quite large class of distributions \citep{Sarkar1997:Simes}, one could expect that conformal \pvals{} are ${\rm MTP}_2$. In the supplementary material, we show that this is not the case through a counter example based on the explicit form~\eqref{eq:joint_mass} (see Theorem~\ref{thm:not_mtp2}).
% Still, we conjecture below, and proceed to show numerically, that the \gls{fwer} can indeed be controlled with the Hochberg procedure when using conformal \pvals{}.

\rev{Through a discussion with Pierre Neuvial, it has been revealed to the authors, that the \gls{fdr} control of the \gls{bh} procedure with the conformal p-values implies Simes' inequality for conformal p-values \cite{Sarkar2008:Simes}. In turn, this implies that any closed testing procedure using a Simes local test at level $\alpha$ controls the \gls{fwer} at level $\alpha$. As a consequence, the Hochberg procedure and also the more powerful Hommel procedure (see \cite{Meijer2018:Hommel}) with conformal p-values both control the \gls{fwer} at the nominal level.}
\begin{corollary}\label{cor:conformal_hochberg}
    \rev{Let assumptions~\ref{A:invariance}-\ref{A:notie} be true for a conformal score $\hats$.
    Then, the Hochberg procedure (See Definition~\ref{def:Hochberg}) at level $\alpha$ with the conformal \pvals{} satisfies
    \begin{equation*}
        {\rm FWER} = \mathbb{P}(V \geq 1) \leq \alpha.
    \end{equation*}}
\end{corollary}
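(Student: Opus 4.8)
The plan is to derive the corollary from the closed testing principle, exploiting the fact --- noted in the discussion above --- that conformal \pvals{} satisfy Simes' inequality. \emph{Step~1 (Simes for the true nulls).} Fix the true-null index set $\Hr_0$; I claim that for every $I\subseteq\Hr_0$ the sub-family $(\hat{p}_j)_{j\in I}$ obeys Simes' inequality at level $\alpha$, that is, $\mathbb{P}\big(\exists\, k\in\{1,\dots,|I|\}:\hat{p}_{(k)}\le k\alpha/|I|\big)\le\alpha$, where $\hat{p}_{(1)}\le\cdots\le\hat{p}_{(|I|)}$ are the ordered values of $\{\hat{p}_j:j\in I\}$. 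Combining \ref{A:invariance} and \ref{A:exchangeability}, the scores $(\hats_1,\dots,\hats_n,(\hats_{n+j})_{j\in I})$ are exchangeable --- they are a permutation-equivariant transformation of an exchangeable vector (the non-$I$ data enters only through a permutation-invariant multiset) --- so by \ref{A:notie} the vector $(\hat{p}_j)_{j\in I}$ has the canonical joint law \eqref{eq:joint_mass} with $m$ replaced by $|I|$, i.e.\ it is a conformal \pval{} family of size $|I|$ in which every hypothesis is null. Applying Theorem~\ref{thm:FDR_control} to this family (where $\pi_0=1$) gives $\mathrm{FDR}\le\alpha$; and when every hypothesis is null the \gls{fdr} equals $\mathbb{P}(V\ge1)$, which is exactly the probability that the \gls{bh} procedure at level $\alpha$ makes at least one rejection --- the event in the displayed Simes inequality. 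This is the ``\gls{fdr} control $\Rightarrow$ Simes'' implication of \cite{Sarkar2008:Simes}.

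\emph{Step~2 (closed testing and domination).} Consider the closed testing procedure whose local test of each intersection hypothesis $H_I=\bigcap_{i\in I}H_0^i$ is the Simes test at level $\alpha$. In the standard proof of the closed testing principle, any false rejection of the closed procedure forces the local Simes test of the true hypothesis $H_{\Hr_0}$ to reject, and Step~1 (with $I=\Hr_0$) bounds that probability by $\alpha$; hence the closed procedure controls the \gls{fwer} at $\alpha$. It is classical \citep{Meijer2018:Hommel} that the rejection set of the Hochberg procedure at level $\alpha$ is contained in that of this Simes-based closed test (equivalently, the Hommel procedure); a procedure whose rejections are always a subset of another's has no larger \gls{fwer}, so the Hochberg procedure controls the \gls{fwer} at $\alpha$, which is the claim.

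The main obstacle is Step~1: one must check carefully that restricting the conformal \pvals{} to a subset of true nulls again yields a bona fide all-null conformal \pval{} family, so that Theorem~\ref{thm:FDR_control} --- hence Simes' inequality --- applies to it. This reduces to exchangeability bookkeeping combining \ref{A:invariance}, \ref{A:exchangeability} and \ref{A:notie}, which is routine once set up correctly; the closed testing and domination steps are classical and need no new work.
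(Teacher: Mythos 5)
Your argument is correct and follows essentially the same route as the paper's own proof: FDR control of the BH procedure with conformal \pvals{} gives Simes' inequality for the family of true nulls, which validates the Simes-based closed testing procedure and hence the dominated Hochberg (and Hommel) procedure. The only difference is that you spell out the restriction-to-$\Hr_0$ exchangeability bookkeeping and the all-null ``FDR $=$ FWER'' identity, which the paper's two-sentence proof leaves implicit.
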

\begin{proof}
    \rev{Simes' inequality is implied by the \gls{fdr} control of the \gls{bh} procedure with conformal p-values. This in turn implies, than any closed testing procedure based on a Simes local test, herein the Hochberg procedure, controls the \gls{fwer}.}
\end{proof}
\begin{remark}
    In the supplementary material, we show that the joint distribution of the ordered \pvals{} is uniform on an integer order set, see Lemma~\ref{lem:conformal_joint_order}. Moreover, Lemma~\ref{lem:grid_poly} in the supplementary material allows for a simple recursive expression for the ${\rm FWER}$, representing a computationally efficient way to numerically evaluate the ${\rm FWER}$.
\end{remark}
Using Lemma~\ref{lem:grid_poly} in the supplementary material, we can numerically compute the theoretical ${\rm FWER}$ for any sequence of thresholds applied on the ordered \pvals{}. We show then in Figure~\ref{fig:polyFWER} the ${\rm FWER}$, for a scenario with $m=10$ and $m_0=5$ where the $m_0$ largest p-values are the true nulls, for significance levels $\alpha=0.05,0.1$, with the Hochberg procedure and the \rev{$\check{\rm S}$id\'{a}k} procedure rejecting at level $1-(1-\alpha)^{1/\hat{m}_0}$ with $\hat{m_0}=m_0$ and $\hat{m_0}=m_0+1$, against the null sample size, $n$. Here an estimator of $m_0$ is used to sharpen the {\rev{$\check{\rm S}$id\'{a}k}} procedure, and as for the traditional Bonferroni procedure this maintains \gls{fwer} control assuming a liberal estimator of $m_0$ is used, such as Storey's estimator~\cite{Storey2004:Strong}.
Figure \ref{fig:polyFWER} shows the conservativeness of the Hochberg procedure for conformal p-values. Moreover, it highlights that the Hochberg procedure is not always better than the \rev{$\check{\rm S}$id\'{a}k} procedure, and vice versa. The over-conservativeness of the \rev{$\check{\rm S}$id\'{a}k} procedure is highly dependent on the quality of the estimator $\hat{m}_0$, while for Hochberg it depends on the order position of the true nulls. We conjecture that a procedure which is uniformly more powerful than both the \rev{$\check{\rm S}$id\'{a}k} procedure and the Hochberg procedure exists, and leave this as a problem for future investigation.
\begin{figure}
    \centering
    \includegraphics[page=10, width=0.55\linewidth]{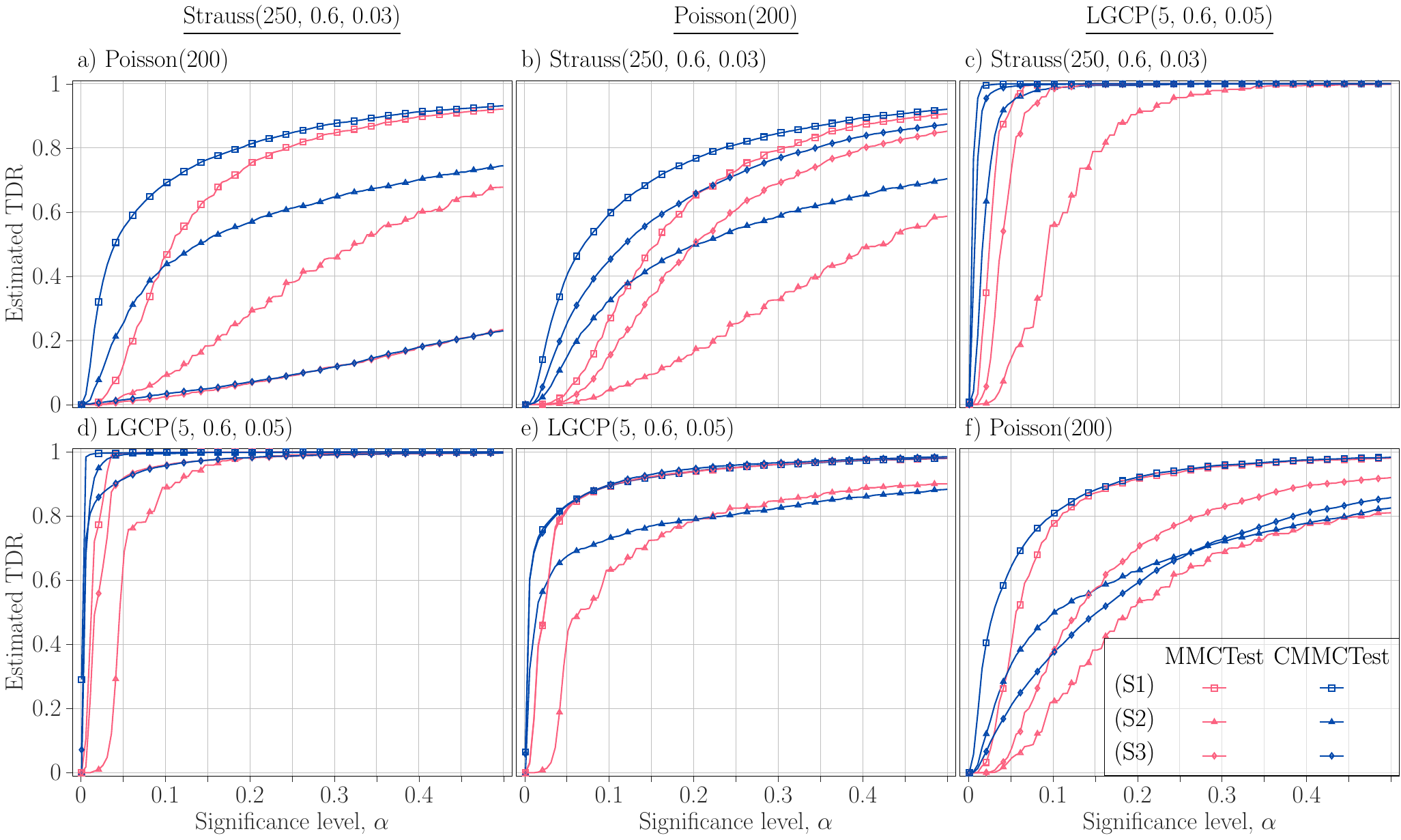}
    \caption{${\rm FWER}$ against the null sample size, $n$.}
    \label{fig:polyFWER}
\end{figure}

\section{Application to spatial point patterns}\label{sec:spatialdata}

\subsection{Point patterns and global envelope tests} \label{sec:pp_envtest}

Spatial points processes are used in various fields such as ecology, social sciences or biology to study the spatial distribution and interaction between objects/events of interest, trees, or cells, see \cite{Baddeley2015:Spatial}.

Mathematically, a point process is a random configuration of points such that in any bounded set there is almost surely a finite number of points. Alternatively, one can see a point process $\Xbs$ as a random set $\lbrace Z_1,\ldots,Z_N \rbrace$ where $N$ is an integer valued random variable, possibly infinite. For a rigorous mathematical presentation on point processes, we refer the reader to \cite{Last2017:LecturePointProcess}.

Due to their complex nature, relevant statistics for point processes that could be used for novelty detection are not scalar but functions, which often depends on the inter point distance, e.g., Besag's $L$-function or Ripley's $K$-function (see e.g. \cite{MollerWaagepeterson04}). The problem with constructing tests based on such functional statistics is that one would need to conduct a global test for each inter point distance in a set of discretization points which requires either a multiple testing procedure or to reduce the functional statistic to a scalar. Examples of such methods are the maximum absolute deviation test and the Diggle-Cressie-Loosmore-Ford test which are respectively adaptations of Kolmogorov-Smirnoff tests and Cramér-Von Mises tests to spatial statistics \citep{Baddeley2015:Spatial}. However, this drastically reduce the power of the tests we could define. A modern approach to tackle goodness-of-fit for point patterns is based on \gls{get}, a functional version of Barnard's Monte Carlo test \citep{Myllymaki2017:Global}. A comprehensive review on goodness-of-fit tests for spatial point processes is given by \cite{Fend2025:Goodness}.

Barnard's Monte Carlo \citep{Barnard1963:Discussion} test can be summed up as following. Consider a statistic $T_{n+1}$ associated to the test point on which we want to test the goodness-of-fit of a null model $P_0$, and $n$ \gls{iid} statistics $T_1,\dots,\,T_n$, computed from $n$ \gls{iid} points simulated under $P_0$. Recall that each data point in this setup is a point pattern. Under the null hypothesis, $T_1,\dots,T_{n+1}$ are \gls{iid} and $T_{n+1}$ is among the $\lfloor\alpha(n+1)\rfloor$ largest $T_i$ with probability $\alpha$.
Notice that the distribution of the $T_i$ does not need to be known. Moreover, as in the conformal framework, we can replace the \gls{iid} assumption by exchangeability of $T_1,\dots,\,T_{n+1}$.

In \gls{get}, $T_i$ are functional statistics, typically the $L$-function. Applying the above methodology requires then to define an order relation between the $T_i$.
In 
{\cite{Mrkvicka2022:New}}, the rank measure is defined as follows. Consider $\Xbs_1,\dots,\Xbs_{n}$ \gls{iid} from a null distribution $P_0$, and $\Xbs_{n+1}$ a test point for which we want to test if $\Xbs_{n+1} \sim P_0$. Let $T_i$ be a functional summary statistic for the point pattern $\Xbs_i$ defined on an interval $I \subset \Rbb$.
For $r\in I$, we denote by $R^{\uparrow}_i(r)$ (respectively $R^{\downarrow}_i(r)$) the ascending (descending) rank of $T_i(r)$ among $\mathcal{T}_{1:n+1} = \{T_1(r),\dots, T_{n+1}(r)\}$.  {Pointwise p-values are $u_i(r) = R_i^*(r)/(n+1)$} where $R_i^*(r) = \min \big\{R^{\uparrow}_i(r), R^{\downarrow}_i(r)\big\}${, and the minimum of the pointwise p-values is defined as $u^{\rm min}_i = \min_{r\in I} u_i(r)$.}
By construction, there will be ties among the {$u^{\rm min}_i$}. The ordering of the curves $T_i$ based on this rank measure is therefore only weak.
To reduce the number of ties, \cite{Myllymaki2017:Global} introduced another ordering called the \gls{erl}.
{Consider a discretization of $I$, denoted $\Idisc = \{r_1,\dots,r_M\}$, resulting in the pointwise p-values $u_{i,j} = u_i(r_j)$, $j=1,\dots,M$. Define then the vector of ordered pointwise p-values $\bs{u}_i = (u_{i,(1)}, \dots, u_{i,(M)})$ such that $u_{i,(1)} \leq u_{i,(2)} \leq \dots \leq u_{i,(M)}$. The extreme rank length ordering is defined as $\bs{u}_j \prec \bs{u}_i$ if}
\begin{equation*}
    {\exists\,l \leq M : u_{j,(k)} = u_{i,(k)}\, \quad \forall\,k < l\,,\; u_{j,(l)}<u_{i,(l)},}
\end{equation*}
${\bs{u}}_j \equiv {\bs{u}}_i$ if neither ${\bs{u}}_j \prec {\bs{u}}_i$ nor ${\bs{u}}_i \prec {\bs{u}}_j$, and denote by ${\bs{u}}_j \preceq {\bs{u}}_i$ that either ${\bs{u}}_j \prec {\bs{u}}_i$ or ${\bs{u}}_j \equiv {\bs{u}}_i$.
Ties are then still possible among the ${\bs{u}}_i$, but are much less likely, in particular if $\Idisc$ is large.

The \gls{erl} is used to define a global extreme rank envelope test for the null hypothesis $\Xbs_{n+1} \sim P_0$.
The \pval{} of the test is
\begin{equation}\label{eq:get_pvalue}
    \hat{p} = \frac{1}{n+1}\Big(1+\sum_{i=1}^{n} \mathbbm{1}[{\bs{u}}_i \preceq {\bs{u}}_{n+1}]\Big)\,,
\end{equation}
which turns out to be a conformal \pval{} with the conformity score: 
\begin{equation}\label{eq:ERL conformity score}
    \hats^{\rm erl}_i \equiv \hats^{\rm erl}(\Xbs_i, (\Xbs_1, \dots,\Xbs_{n}, \Xbs_{n+1})) = {\bs{u}}_i.
\end{equation}
\begin{remark}
    The \gls{erl} measure depends on the set of summary statistics used for the ordering, $\mathcal{T}_{1:n+1}$, such that a more rigorous notation is ${\bs{u}}_i \equiv {\bs{u}}_i^{\mathcal{T}_{1:n+1}}$. We shall require this in the following section.
\end{remark}

An important hyperparameter for global extreme rank envelope is the choice of the size $n$ of $\Dnull$. Some recommendations can be found in \cite{Myllymaki2017:Global}, \cite{Mrkvicka2017:Multiple}, and other related works. Generally speaking, an appropriate choice of $n$ depends on many factors, among others $|\Idisc|$, but a repeated recommendation has been to use $n=2500$, which is costly when repeated for each test point.

One of the strengths of the \gls{get} is the graphical interpretation \citep{Myllymaki2017:Global}: a $100(1-\alpha)~\%$ global envelope is given by the interval-valued function $B_\alpha(r) = [T_{(k_\alpha)}(r), T_{(n+1-k_\alpha+1)}(r)]$ for critical rank
\begin{equation*}
    k_\alpha = {\rm max}\Big\{k~:~\sum_{i=1}^{n+1} \mathbbm{1}\Big[\sum_{j=1}^{n+1} \mathbbm{1}[\hats^{\rm erl}_j \preceq \hats^{\rm erl}_i] < k\Big] \leq \alpha(n+1)\Big\}.
\end{equation*}
This is also called the $k_\alpha$-th rank envelope.
As Theorem 1 of \cite{Myllymaki2017:Global} expresses, the decision made by the \pval{} of~\eqref{eq:get_pvalue} corresponds with rejecting the hypothesis when $T_{n+1}(r)$ falls outside $B_\alpha(r)$ for any $r\in \Idisc$.

Note that other functional ranking measures have been proposed, for instance the continuous rank and area measures of \cite{Mrkvicka2022:New}. Moreover, \cite{Mrkvicka2023:FDR} proposed a \gls{get} based on the \gls{fdr} criterion for hypothesis tests of a single point pattern. These \gls{fdr} global envelopes are constructed to control the \gls{fdr} across $r\in \Idisc$ for the functional summary statistics.

Multiple testing of replicated point patterns have also been considered in the literature on global envelope testing but only for a global null hypothesis, that is to test the null hypothesis $H_0: P_j = P_0$, for all $j=1,\dots,m$. 
\cite{Mrkvicka2017:Multiple} reports a result of testing if all point patterns in a group have the property of complete spatial randomness, which is done by concatenating the functional summary statistics and doing a \gls{get} on this extended domain. This test has the graphical interpretation allowing the practitioner to visually interpret what causes a potential rejection. However, the testing scenario considered in this work differs substantially from the aforementioned scenario as we test local hypotheses, meaning $H_0^j : P_j = P_0$, thereby making statements for the individual point patterns in a group, while ensuring rigorous control of the {\gls{fdr}} or {\gls{fwer}}. As described in~Section~\ref{sec:methodology} and seen in Figure~\ref{fig:graphical_sweat_gland}, we also have a graphical interpretation. A comparison with the testing procedure in \cite{Mrkvicka2017:Multiple}, presented in Section~\ref{sec:global_test_appendix} of the supplementary material, reveals a significant gain in power when using conformal p-values.

In a multiple Monte Carlo test context with test points $\Xbs_{n+1}, \ldots, \Xbs_{n+m}$, e.g.~multiple \gls{get}, we need to generate $m$ null samples $\Dnull_j$, each of them with $n_j$ observations. Then, for $j=1,\ldots,m$ and $i=1,\ldots,n_j$, we use test point $\Xbs_{n+i}$ and $\Dnull_j$ to compute the conformity score as in~\eqref{eq:ERL conformity score}. Then, the global extreme rank envelope test \pval{}, or equivalently the conformal \pval{}, is
\begin{equation*}
    \hat{p}_j = \frac{1 + \sum_{i=1}^{n_j} \mathbbm{1}[ \hats^{\rm erl}_{j,i} \preceq \hats^{\rm erl}_{j,j}]}{n_j+1}
\end{equation*}
where $\hats^{\rm erl}_{j,i}$ is as in~\eqref{eq:ERL conformity score} but computed with the null sample $\Dnull_j$. In this way, the \pvals{} are independent so that the \gls{bh} procedure in Definition~\ref{def:BH} controls the \gls{fdr}, see~\cite{Benjamini1995:FDR}. However, this can be unfeasible in practice since we need to simulate $\sum_{j=1}^m n_j$ point patterns under $P_0$. Such a procedure with $n_1=\cdots=n_m$ is a naïve \gls{mmctest}.
We notice that multiple Monte Carlo testing is, up to a change in a notation, a specific instance of conformal novelty detection \citep{Bates2023:Testing} where we use independent null samples for each test point.

\subsection{Conformal Multiple Monte Carlo test}\label{sec:methodology}
We consider then the novelty detection setting of Section~\ref{sec:introduction}, where the $\Xbs_i$ are point processes. We define a conformal score with \gls{erl} and assign to each $\Xbs_i$, for $i=1,\ldots, n+m$ the conformal score
\begin{equation}\label{eq:extreme_rank_length_statistic_joint}
    \hats^{\rm joi\text{-}erl}_i = \hats^{\rm erl}(\Xbs_i, (\Xbs_1, \dots, \Xbs_{n}, \Xbs_{n+1}, \ldots \Xbs_{n+m})) = {\bs{u}}_i^{\mathcal{T}_{1:n+m}},
\end{equation}
where ${\bs{u}}_i^{\mathcal{T}_{1:n+m}}$ is the \gls{erl} measure among $\mathcal{T}_{1:n+m} = \{T_1,\dots,T_{n+m}\}$. We will refer to this as the joint \gls{erl} conformal score. {The corresponding conformal p-value for the $j$-th test point $\Xbs_{n+j}$, $j=1,\dots,m$, is
\begin{equation}\label{eq:pval_joi-erl}
    \hat{p}^{\rm joi\text{-}erl}_j = \frac{1}{n+1} \Big(1+\sum_{i=1}^{n} \mathbbm{1}[\hats^{\rm joi\text{-}erl}_i \preceq \hats^{\rm joi\text{-}erl}_{n+j}]\Big).
\end{equation}}%
As an alternative, we consider also a parallel \gls{erl} conformal score, in which the functional ranking is done in parallel for each test point. This has a higher computational cost but can benefit from less dependence between \pvals{}, yielding a more powerful test. This parallel \gls{erl} conformal score is defined as
\begin{equation}\label{eq:extreme_rank_length_statistic_parallel}
    \hats^{\rm par\text{-}erl}_{{j,i}} = \hats^{\rm erl}(\Xbs_i, (\Xbs_1, \dots, \Xbs_{n}, \Xbs_{n+j})) = {\bs{u}}_i^{\mathcal{T}_{1:n,j}},
\end{equation}
for $i\in\{1,\dots,n,n+j\}$ where ${\bs{u}}_i^{\mathcal{T}_{1:n,j}}$ is the \gls{erl} measure among $\mathcal{T}_{1:n,j} = \{T_1,\dots,T_n, T_{n+j}\}$. {Accordingly, the conformal p-value for the $j$-th test point $\Xbs_{n+j}$, $j=1,\dots,m$, is
\begin{equation}\label{eq:pval_par-erl}
    \hat{p}^{\rm par\text{-}erl}_j = \frac{1}{n+1} \Big(1+\sum_{i=1}^{n} \mathbbm{1}[\hats^{\rm par\text{-}erl}_{{j,i}} \preceq \hats^{\rm par\text{-}erl}_{j,n+j}]\Big).
\end{equation}}%
Unfortunately, ties are possible with the \gls{erl}, which would violate Assumption~\ref{A:notie}. But, as noticed in~\cite{Myllymaki2017:Global}, as $n$ grows, this becomes increasingly unlikely. Thus, we assume that there is, in practice, approximately no ties so that Assumption~\ref{A:notie} holds. Moreover, these conformal scores verify Assumption~\ref{A:invariance}. Finally, Assumption~\ref{A:exchangeability} holds as $\Xbs_1,\dots,\Xbs_n$ are sampled \gls{iid} from $P_0$ and independently of $\Xbs_{n+1},\dots,\Xbs_{n+m}$. Therefore, as Assumptions~\ref{A:invariance}-\ref{A:notie} hold, we can apply Theorem~\ref{theorem:conformal_Storey} to control the \gls{fdr} in a novelty detection setting with the conformal \pvals{} {of \eqref{eq:pval_joi-erl} and \eqref{eq:pval_par-erl}}.

Note that, contrary to the \pvals{} defined in Section~\ref{sec:pp_envtest} for global rank envelopes, we require only one null sample $\Dnull$. The conformal \pvals{} are not independent but the validity of Storey's \gls{bh} procedure is guaranteed by Theorem~\ref{theorem:conformal_Storey}. With the same arguments, we conjecture that we can apply the Hochberg procedure to control the \gls{fwer}, as discussed in Section~\ref{sec:controlling_FWER}. We name these procedures \gls{cmmctest}.

As mentioned earlier, one of the attractive properties of \gls{get} is the graphical interpretation, and this can also be done with our proposed multiple testing procedure. In this case the coverage region is adjusted for each test point according to the threshold which the \pval{} is compared to. For Storey's \gls{bh} procedure, the sequence of thresholds is $t_j = j\alpha/\hat{m}_0$ where $\alpha$ is the significance level and $\hat{m}_0$ is the estimate of the number of true nulls in the test set. This sequence of thresholds defines a sequence of $100(1 - t_j)~\%$ coverage regions, specifically $B_{t_j}(r) = [T_{(k_{t_j})}(r), T_{(n+\rev{m}-k_{t_j}+1)}(r)]$ for critical rank $k_{t_j}$, giving upper and lower envelopes for the test point with the $j$-th smallest \pval{} in the test set.

In spatial statistics, it is very common, as for the sweat gland data in \cite{Kuronen2021:Point}, that $n=|\Dnull|$ is small and $P_0$ is also unknown. If the data were not point patterns, bootstrap resampling techniques could be used to augment the null training sample, as in \cite{Guo2008:Adaptive}. One way to address the issue of a small null sample with unknown distribution is to assume a parametric distribution $P(\xbs|\bs{\theta})$ for the point patterns given some parameter $\bs{\theta}$ such that $P_0 = P(\xbs, \bs{\theta}) = P(\xbs|\bs{\theta}) P(\bs{\theta})$. Thus, for each point pattern in $\Dnull$, we can compute an estimate $\hat{\theta}_i$ of $\theta$, for $i=1,\ldots, {n}$. Then, we estimate $P_0$ by $\hat{P}_0 = \hat{P}(\xbs, \bs{\theta}) = \sum_{i=1}^n P(\xbs|\hat{\bs{\theta}}_i)/n$. Finally, we simulate several point patterns, under $\hat{P}_0$, to augment $\Dnull$. Note that, if $m$ is moderately large, this remains way easier than to simulate independent calibration sets as in Section~\ref{sec:pp_envtest},

There are of course multiple downsides to the aforementioned parametric method: (i) model misspecification can occur in the sense that $P_0\neq P(\xbs|\bs{\theta}) P(\bs{\theta})$; (ii) the model fitting can in some cases be computationally complex; (iii) the observed point patterns $\bs{x}_1, \dots, \bs{x}_n$ can be on a small observation window and/or with few points leading to an inaccurate model fit; and (iv) the estimation method may not be consistent.

Alternatively, it can be beneficial to consider a non-parametric technique for sampling under the null. One such technique is to do bootstrap resampling under the empirical distribution of the functional summary statistic in question, as in \cite{Diggle1991:Analysis}. Another way is to use $p$-thinning \citep{Cronie2023:Cross}: some functional summary statistics, including the pair-correlation function, the Ripley $K$-function, and the Besag $L$-function, are theoretically unchanged when doing $p$-thinning, and hence, such a resampling technique can be considered to yield conditionally independent samples from the summary statistic, conditioned on the observed point pattern. However, due to the conditioning on the observed point patterns, such a sample can in general not be considered an exchangeable null sample. We leave investigation on non-parametric techniques as a topic for future research.

\section{Simulation Study}\label{sec:simulation}

In this section, we validate through numerical experiments that the proposed method is more powerful than existing techniques. As a baseline we consider the naïve \gls{mmctest} method which divides the null sample evenly between the test points, thereby reducing the size of the null sample for each individual test point, however by this construction generating independent \pvals{}. {Thus, the only difference between CMMCTest and naïve MMCTest is that, for a fixed simulation budget of $n$, and $m$ test points, \gls{cmmctest} uses the same $n$ simulated test statistics for each \pval{}, and naïve \gls{mmctest} uses $n/m$ simulated test statistics for each \pval{}, different for each test.}

In the simulation study, we consider three types of point processes on the unit square window all with approximate intensity of $200$.
\begin{itemize}
    \item ${\rm Poisson}(200)$: a Poisson process with intensity $200$, modeling complete spatial randomness;
    \item ${\rm Strauss}(250, 0.6, 0.03)$: a Strauss process with intensity parameter $250$, interaction parameter $0.6$, and interaction radius $0.03$, modeling inhibition;
    \item ${\rm LGCP}(5, 0.6, 0.05)$: a log-Gaussian Cox process with mean $5$ and exponential covariance function with variance $0.6$ and scale $0.05$, modelling aggregation.
\end{itemize}
Example realizations of the point processes are shown in Figure~\ref{fig:realizations}.
The code used for the study can be found in the \href{https://github.com/Martin497/Conformal-novelty-detection-for-replicate-point-patterns}{github repository}\footnote{\url{https://github.com/Martin497/Conformal-novelty-detection-for-replicate-point-patterns}}.

In the following, we begin by showing the performance gains of the \gls{cmmctest} method over the \gls{mmctest} method when controlling the \gls{fdr} using Storey's \gls{bh} procedure, assuming either that the null distribution is known, or a setting where the null distribution is estimated from a number of training point patterns. This is followed by a study of the performance of \gls{cmmctest} for different choices of conformal scores, and afterwards we observe how the significance level, $\alpha$, the number of tests, $m$, and the size of the null sample, $n$, influences the power. Finally, we present results from the same simulation study, but controlling the \gls{fwer} with the Hochberg procedure. In all cases, we estimate the \gls{fdr} (or \gls{fwer}) and \gls{tdr} with $2000$ simulations.

\begin{figure}[t]
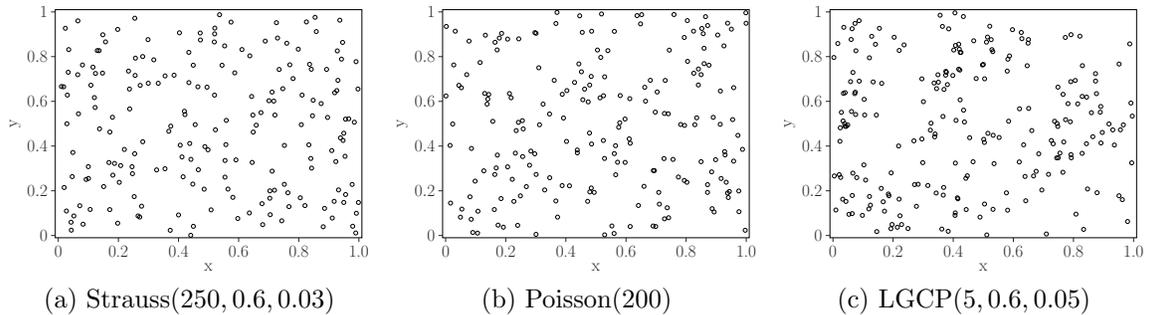

    \centering
    \begin{minipage}{0.3\linewidth}
        \centering
        \subfloat[${\rm Strauss}(250, 0.6, 0.03)$]{\includegraphics[width=1\linewidth, page=11]{figures/P5_Figures.pdf}\label{subfig:realization_Strauss}}
    \end{minipage}~
    \begin{minipage}{0.3\linewidth}
        \centering
        \subfloat[${\rm Poisson}(200)$]{\includegraphics[width=01\linewidth, page=12]{figures/P5_Figures.pdf}\label{subfig:realization_Poisson}}
    \end{minipage}~
    \begin{minipage}{0.3\linewidth}
        \centering
        \subfloat[${\rm LGCP}(5, 0.6, 0.05)$]{\includegraphics[width=1\linewidth, page=13]{figures/P5_Figures.pdf}\label{subfig:realization_LGCP}}
    \end{minipage}
    % \vspace{1pt}
    \caption{Example realizations of the point processes considered in the simulation study.}
    \label{fig:realizations}
\end{figure}

\subsection{Power comparison: Storey's BH procedure}
\begin{figure}[t]
    \centering
    \includegraphics[width=0.9\linewidth, page=1]{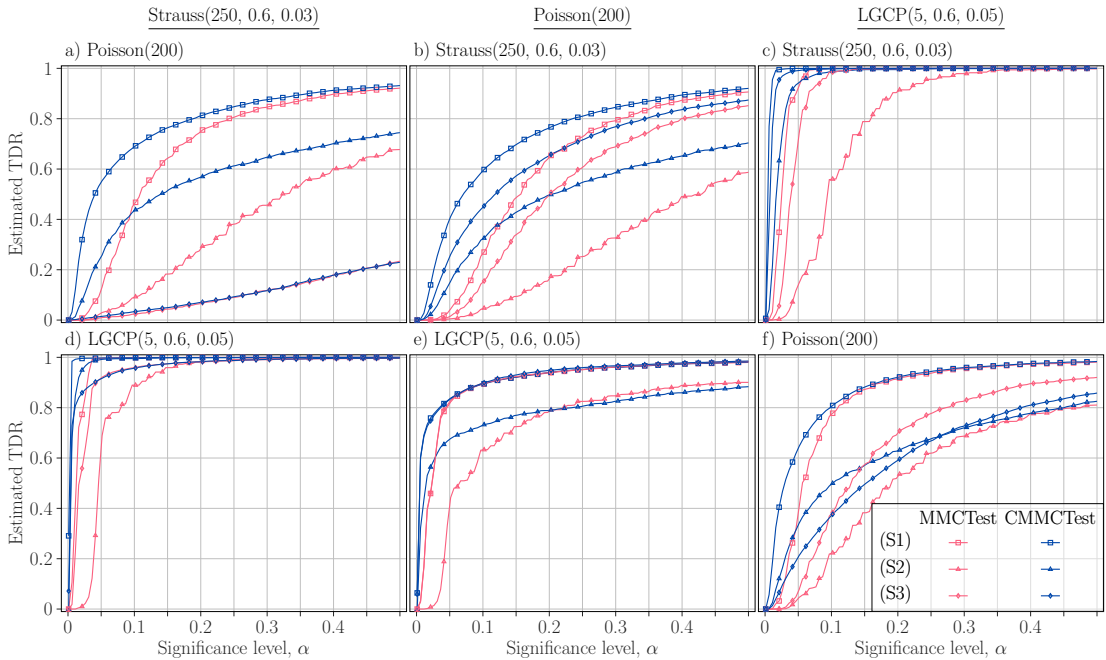}
    \vspace{-2mm}
    \caption{Estimated power curves against the significance level, $\alpha$, when using Storey's \gls{bh} procedure with $\lambda=0.5$. At the top of the figure, the null distribution is underlined, while the non-true null distribution is read above each individual plot. All the plots share the same legend.}
    \label{fig:power_curves}
\end{figure}

\begin{figure}[t]
    \centering
    \includegraphics[width=0.95\linewidth, page=2]{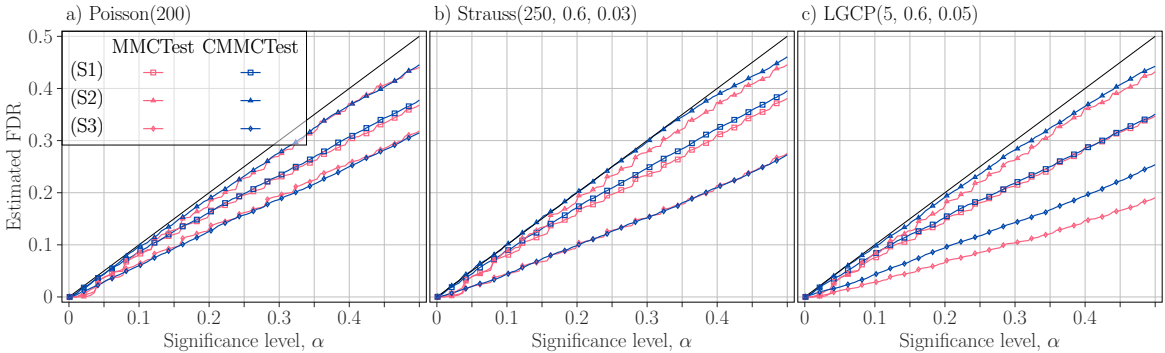}
    \vspace{-2mm}
    \caption{Estimated \gls{fdr} curves against the significance level, $\alpha$, when using Storey's \gls{bh} procedure with $\lambda=0.5$. The black straight line, ${\rm FDR}=\alpha$, shows the nominal level. All the plots share the same legend.}
    \label{fig:FDR_curves}
\end{figure}
We begin by comparing \gls{mmctest} to \gls{cmmctest} using power and \gls{fdr} curves. In this simulation study, the centered L-function \citep{Baddeley2015:Spatial} is used as the functional summary statistic, and the function ranking is done using the \gls{erl} measure in the \textit{R} package \textit{GET} \citep{Myllymaki2023:Global}, doing parallel ranking of the test point patterns,~\eqref{eq:extreme_rank_length_statistic_parallel}. We consider three cases:
\begin{enumerate}[label=({S\arabic*})]
    \item known null distribution with $m_0=5$;\label{enum:S1}
    \item known null distribution with $m_0=9$;\label{enum:S2}
    \item unknown null distribution with $m_0=5$ and $10$ observations from the null.\label{enum:S3}
\end{enumerate}
In all these cases, we consider a simulation budget of $n=2500$, and $m=10$ test points. {Hence, for CMMCTest the null sample consists of $2500$ point patterns while for MMCTest the null sample for each test point pattern consists of just $250$ point patterns.} When fitting the models in Scenario~\ref{enum:S3} we use maximum likelihood for Poisson, maximum profile pseudolikelihood for Strauss, and minimum contrast estimation with Ripley's K-function for the log-Gaussian Cox process. This is done with default parameters using the standard model fitting functionality in \textit{spatstat} \citep{Baddeley2015:Spatial}.

In Figure~\ref{fig:power_curves}, where all the plots share the same legend, we show estimated power curves against the significance level, $\alpha$, when using Storey's \gls{bh} procedure with $\lambda=0.5$ for each of the $6$ combinations of null distributions and non-true null distributions and in the three scenarios \ref{enum:S1}, \ref{enum:S2}, and \ref{enum:S3}.
For the same cases, we show in Figure~\ref{fig:FDR_curves} the estimated \gls{fdr} curves.
We observe from Figure~\ref{fig:power_curves} that in nearly all cases \gls{cmmctest} dominates \gls{mmctest} in terms of power, with a few exceptions where the power is almost equal at relative large significance levels. Moreover, higher power is achieved when $m_0=5$ compared to when $m_0=9$, which is a standard consequence of multiplicity control: recall Storey's BH procedure from Section~\ref{sec:multiplehyptesting} \citep{Benjamini1995:FDR,Storey2004:Strong}. Also, approximating the null distribution from only $10$ samples results in notable power loss in most cases, particularly in Figure~\ref{fig:power_curves}a in which the null distribution is ${\rm Strauss}(250, 0.6, 0.03)$ and the non-true null distribution is ${\rm Poisson}(200)$. The power loss is substantial here since the considered Strauss and Poisson processes can yield quite similar point patterns, and in this specific numerical experiment, the estimate of the interaction parameter for one of the Strauss point patterns is one, thereby coinciding with Poisson.
On the other hand, Figure~\ref{fig:power_curves}e shows no loss in power due to approximating the null, which we attribute to the lack of importance of the exact value of the intensity in the L-function.
Figure~\ref{fig:FDR_curves} shows that \gls{cmmctest} is conservative when using Storey's \gls{bh} procedure. This conservativeness is exacerbated when $m_0$ is smaller, and even more so when the null distribution must be approximated.
\begin{figure}[t]
    \centering
    \includegraphics[width=0.95\linewidth, page=3]{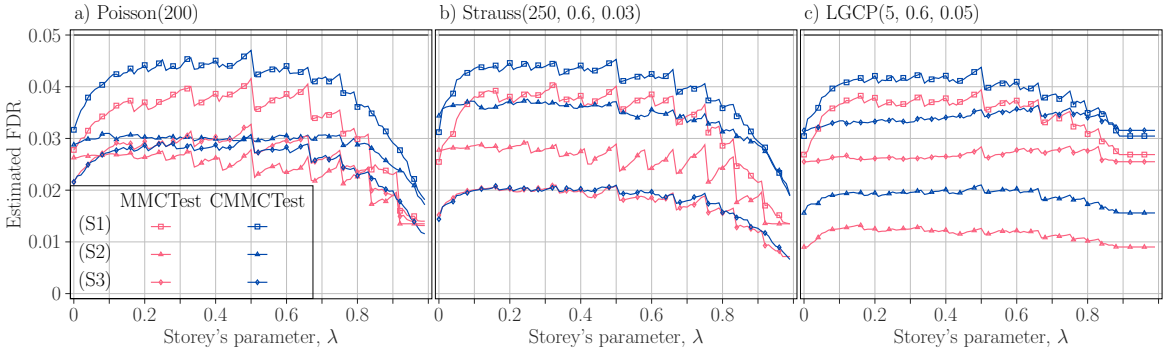}
    \vspace{-2mm}
    \caption{Estimated \gls{fdr} according to the choice of Storey's parameter, $\lambda$. All the plots share the same legend.}
    \label{fig:Storey_lambda}
\end{figure}

The preceding numerical experiments raises a question regarding the choice of the $\lambda$ hyperparameter used in Storey's estimator, cf.~Definition~\ref{def:storey}. We have explored the influence of this choice for the same scenarios as previous for $\alpha=0.05$. We see in Figure~\ref{fig:Storey_lambda} that Storey's \gls{bh} procedure is not sensitive to the choice of the parameter $\lambda$, and that the procedure is conservative for any $\lambda \in (0,1)$ as suggested by Theorem~\ref{theorem:conformal_Storey}. Particularly, it seems that choosing $\lambda \in [0.1, 0.7]$ yields comparable \gls{fdr}. We skip showing the corresponding power curves as they output the same conclusions.

\subsection{Choice of conformal score}\label{subsec:choice_of_score}
So far we have only considered one type of conformal score in using the \gls{erl} measure for the centered L-function doing the ranking in parallel for each of the test point patterns. We consider now different GET measures, specifically the continuous rank measure, denoted \textit{cont}, and the area measure, denoted \textit{area}, both proposed by \cite{Mrkvicka2022:New}. {Notably, both these scores are continuous solving any theoretical problems related to ties.} Moreover, we use also the J-function (see \cite{Lieshout1996:Nonparametric}) in place of the centered L-function. Finally, we try also the joint ranking of~\eqref{eq:extreme_rank_length_statistic_joint}.
\begin{figure}[t]
    \centering
    \includegraphics[width=0.86\linewidth, page=7]{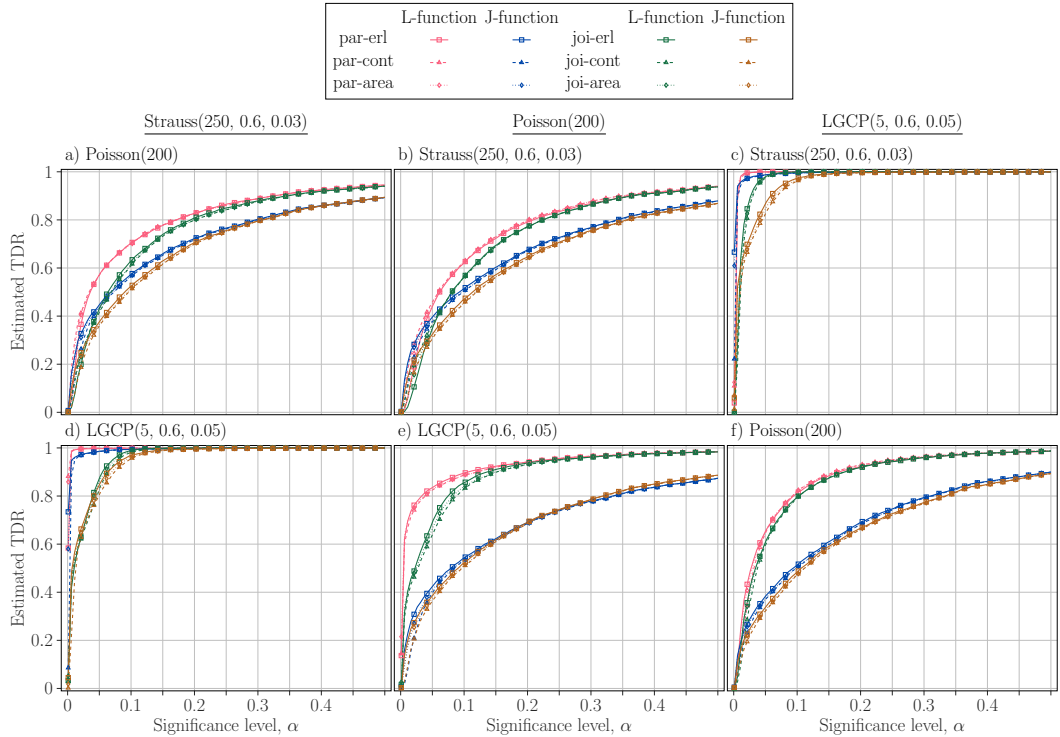}
    \vspace{-2mm}
    \caption{Estimated power curves against the significance level, $\alpha$, when using Storey's \gls{bh} procedure with $\lambda=0.5$ for a variety of conformal scores. At the top of the figure, the null distribution is underlined, while the non-true null distribution is read above each individual plot.}
    \label{fig:TDR_scores}
\end{figure}
\begin{figure}[t]
    \centering
    \includegraphics[width=0.98\linewidth, page=8]{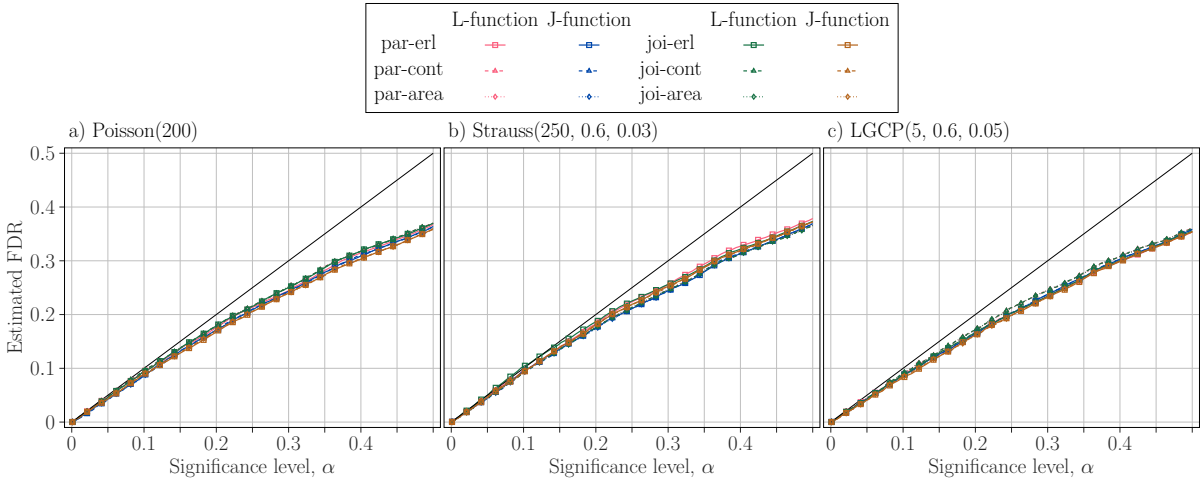}
    \vspace{-2mm}
    \caption{Estimated \gls{fdr} curves against the significance level, $\alpha$, when using Storey's \gls{bh} procedure with $\lambda=0.5$ for a variety of conformal scores. The black straight line, ${\rm FDR}=\alpha$, shows the nominal level.}
    \label{fig:FDR_scores}
\end{figure}

We use Storey's \gls{bh} procedure with $\lambda=0.5$, $n=2500$, and consider $m=10$ with $m_0=5$. Figure~\ref{fig:TDR_scores} shows the \gls{tdr} for all the test cases, while the corresponding \gls{fdr} is found in Figure~\ref{fig:FDR_scores}. We observe in Figure~\ref{fig:TDR_scores} that no substantial differences occur whether we use the \gls{erl} measure, the \textit{cont} measure, or the \textit{area} measure. Moreover, we notice that the centered L-function tends to yield a more powerful test than that of the J-function, which is consistent with the observations of \cite{Mrkvicka2017:Multiple}.

Finally, we see that doing parallel ranking of the test point patterns in all cases yields a uniformly more powerful test than doing joint ranking. This is due to an additional dependency added in the \pvals{} when doing the joint ranking. Hence, there is a trade-off between computational complexity and performance: if the cost of doing the functional ranking is not substantial, the parallel ranking is preferred, however, in cases where $m$ is very large so the computational demands of doing the parallel ranking is impractical, the joint ranking can be used. We remark also that in most cases, using the \gls{cmmctest} with the joint ranking is still more powerful than \gls{mmctest}, which can be seen by comparing Figure~\ref{fig:TDR_scores} and Figure~\ref{fig:power_curves}.

\subsection{Robustness against multiplicity}
In this scenario, we consider different settings of triplets $(\alpha, m, n)$. This has been studied theoretically by \cite{Mary2022:Semi} and \cite{Marandon2024:Adaptive}, while in the current study we evaluate numerically the power through simulations and focus on the benefit of using the \gls{cmmctest} as compared to \gls{mmctest}. The null distribution is ${\rm Poisson}(200)$ and the non-true null distribution is ${\rm Strauss}(250, 0.6, 0.03)$, setting $m_0=m/2$ in all cases, and we use Storey's \gls{bh} procedure with $\lambda=0.5$. In Figure~\ref{fig:nplot}, we show the \gls{tdr} for $\{(\alpha, m, n) : \alpha\in\{0.05, 0.1, 0.2\}, m\in\{6, 10, 20, 30\}, n\subset\{240, \dots, 2520\}\}$. We observe that for \gls{cmmctest}, the value of $m$ has little influence, and when $\alpha=0.2$ having a larger $m$ can even give slight improvements in \gls{tdr}. Meanwhile, a large $m$ is detrimental to the power of \gls{mmctest}, and only the case $\alpha=0.2$ with $m\in\{6, 10\}$ shows comparable performance to the \gls{cmmctest}.
\begin{figure}[t]
    \centering
    \includegraphics[page=9, width=0.95\linewidth]{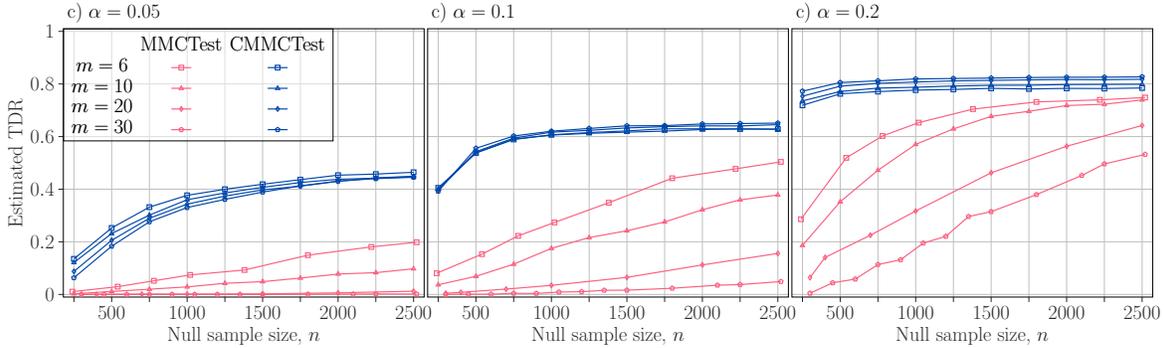}
    \vspace{-2mm}
    \caption{Estimated \gls{tdr} of the \gls{mmctest} and \gls{cmmctest} for varying triplets $(\alpha, m, n)$. All the plots share the same legend.}
    \label{fig:nplot}
\end{figure}

\subsection{Power comparison: Hochberg procedure}\label{subsec:controlling_FWER_results}
As shown in Theorem~\ref{thm:conformal_hochberg}, using the Hochberg procedure we can control the \gls{fwer} with conformal \pvals{}. We report here results using the Hochberg procedure for scenarios \ref{enum:S1}, \ref{enum:S2}, and \ref{enum:S3}: \gls{tdr} curves are shown in Figure~\ref{fig:power_Hoch}, and \gls{fwer} curves are shown in Figure~\ref{fig:FWER_Hoch}. Firstly, we see from Figure~\ref{fig:FWER_Hoch} that indeed we control the \gls{fwer}, and that estimating the null model, or having a larger $m_0$, results in a more conservative test. Inspecting Figure~\ref{fig:power_Hoch} and comparing to Figure~\ref{fig:power_curves} we notice that for $m=10$ with $m_0=5$ the power loss of the Hochberg procedure in comparison to Storey's \gls{bh} procedure is not so pronounced, however, when $m_0=9$ the strict \gls{fwer} criterion leads to a severe loss in power. In all cases, estimating the null model results in a loss in power as expected and also seen previously for Storey's \gls{bh} procedure. Finally, we note once again that the \gls{cmmctest} outperforms the \gls{mmctest} in nearly all considered cases.
\begin{figure}
    \centering
    \includegraphics[width=0.9\linewidth, page=4]{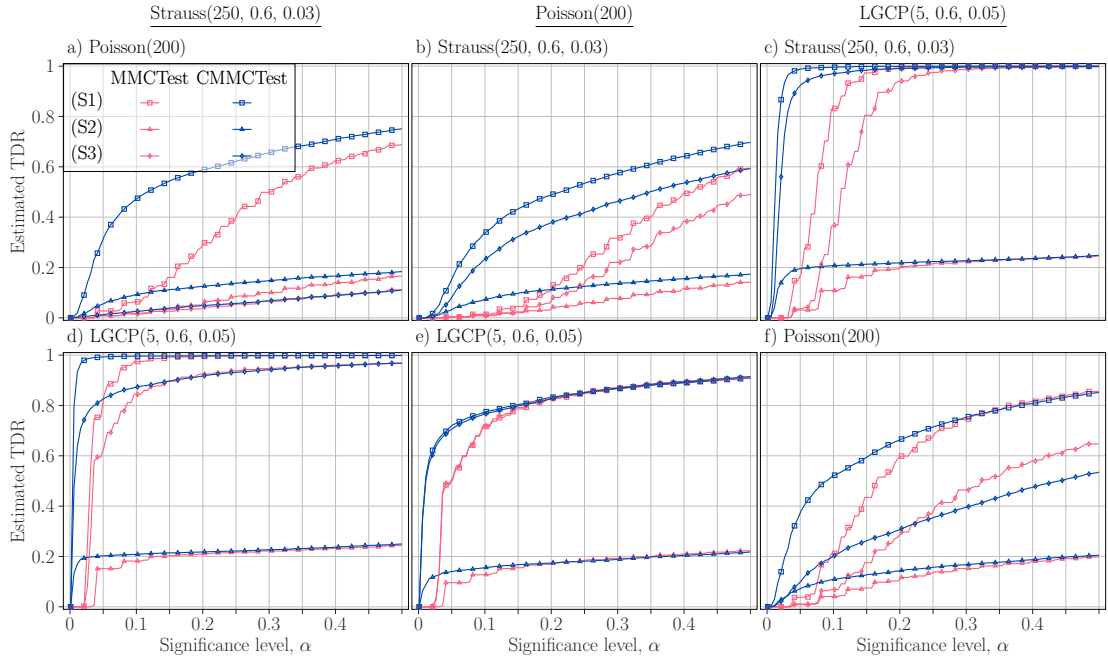}
    \vspace{-2mm}
    \caption{Estimated power curves against the significance level, $\alpha$, when using the Hochberg procedure. At the top of the figure, the null distribution is underlined, while the non-true null distribution is read above each individual plot. All the plots share the same legend.}
    \label{fig:power_Hoch}
\end{figure}

\begin{figure}
    \centering
    \includegraphics[width=0.98\linewidth, page=5]{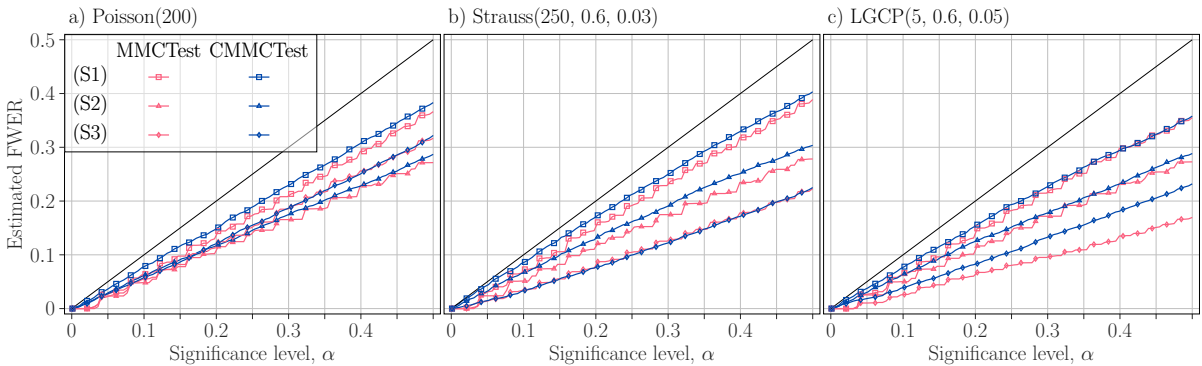}
    \vspace{-2mm}
    \caption{Estimated \gls{fwer} curves against the significance level, $\alpha$, when using the Hochberg procedure. The black straight line, ${\rm FWER}=\alpha$, shows the nominal level. All the plots share the same legend.}
    \label{fig:FWER_Hoch}
\end{figure}

\section{Real data set: Sweat glands}\label{sec:real_data}
The sweat gland data considered in \cite{Kuronen2021:Point} includes three groups of point patterns, see Figure~\ref{fig:sweat_gland_data}: (i) a control group of $4$ subjects (numbers 96, 149, 203, 205)\footnote{We have excluded subject 97 to avoid problems arising from having a different observation window.} (ii) a group of $5$ subjects (numbers 10, 20, 40, 61, 71) suspected to have neuropathy (MNA), and (iii) a group of $5$ subjects (numbers 23, 36, 42, 50, 73) diagnosed with MNA. For the point pattern data, three different models were proposed: (i) a sequential point process model, (ii) a sequential point process model with noise (SMWN), and (iii) a generative model (GM). For each model, a parameter estimation method was proposed, which for the sequential point process models was based on maximum likelihood and for the generative model relied on approximate Bayesian computation Markov chain Monte Carlo sampling since for this model the likelihood is intractable.

\begin{figure}
    \centering
    \includegraphics[width=0.8\linewidth]{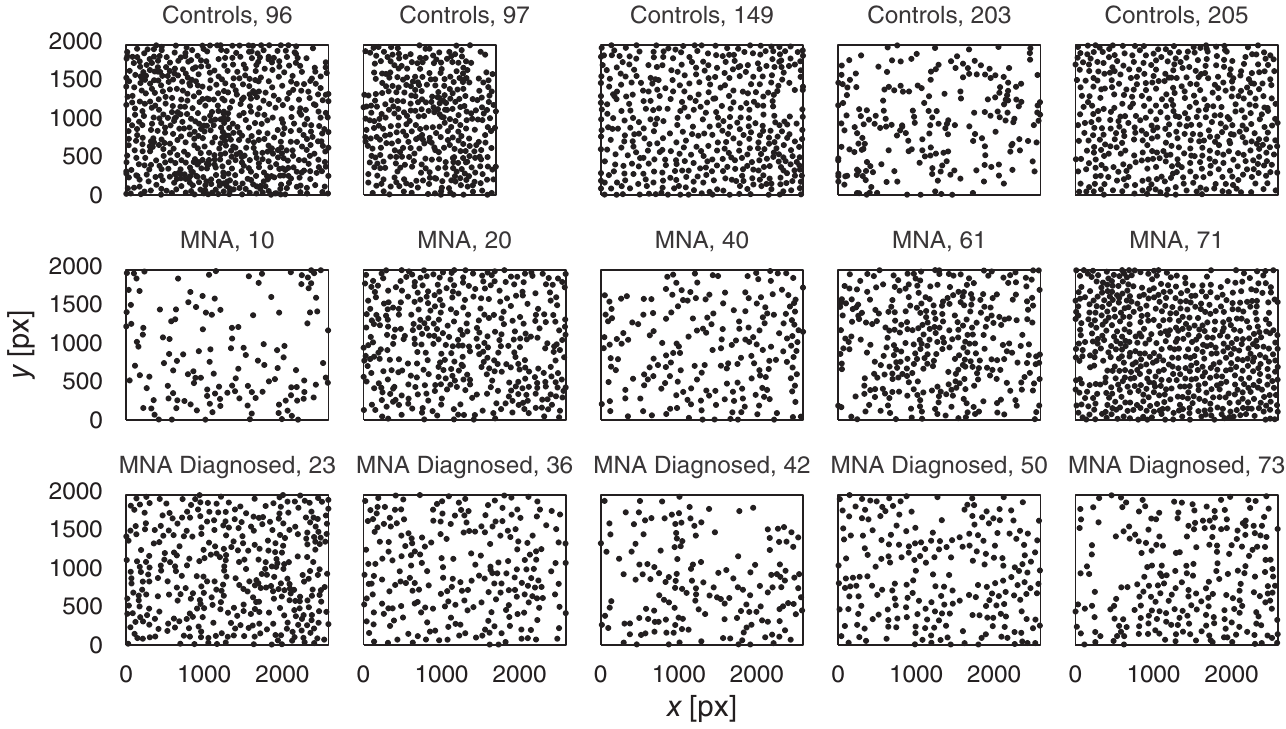}
    \vspace{-4mm}
    \caption{The sweat gland data. Figure taken from \cite{Kuronen2021:Point}.}
    \label{fig:sweat_gland_data}
\end{figure}

With this data different null hypotheses can be considered. Firstly, one could consider the null hypothesis that a subject does not have MNA. In this case, the control group should be considered as the null sample, and we can run a multiple testing procedure on the MNA suspected and MNA diagnosed groups. Secondly, one could consider the null hypothesis that a subject has MNA. In this case, the MNA diagnosed group should be considered as the null sample, and we can run a multiple testing procedure on the MNA suspected and control groups. For both of these null hypotheses, we could use any (or all) of the proposed models and associated inference algorithms proposed in \cite{Kuronen2021:Point} in order to increase the size of the null sample, as $5$ is not sufficient.

Figure~\ref{fig:sweat_gland_rejections} shows the number of rejections in the four cases considering either the MNA diagnosed subjects as the null sample or the control subjects as the null sample, and using either the sequential model with noise or the generative model as the parametric model, discussed in Section~\ref{sec:methodology}, to simulate point patterns. In all cases, Storey's \gls{bh} procedure is used with $\lambda=0.5$ and simulating $n=5000$ point patterns under the fitted null distributions.
\begin{figure}
    \centering
    \includegraphics[width=0.95\linewidth, page=6]{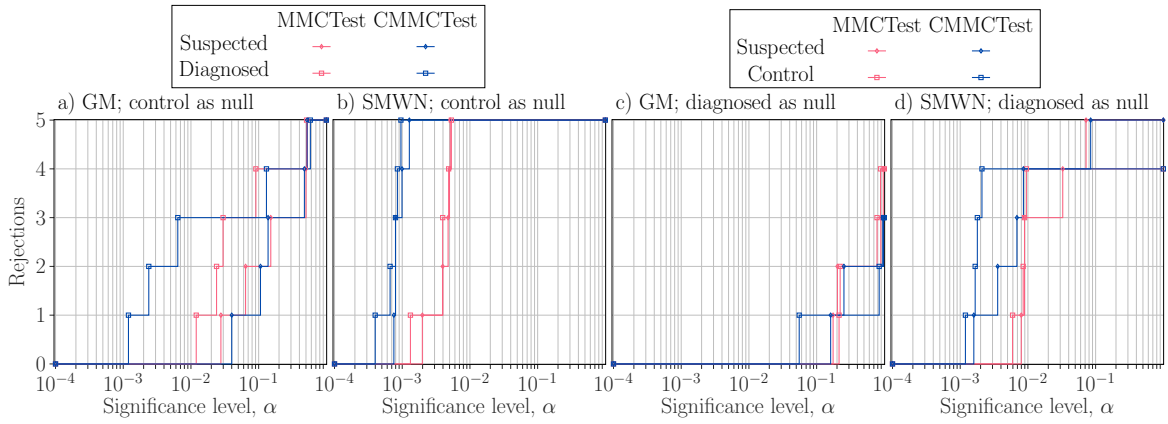}
    \vspace{-2mm}
    \caption{Number of rejections against the significance level, $\alpha$.}
    \label{fig:sweat_gland_rejections}
\end{figure}
We observe from Fig~\ref{fig:sweat_gland_rejections}, that using the sequential model with noise generally leads to more rejections than for the generative model. 
{This may indicate that the sequential model with noise yields a better and more representative fit on the null sample than the generative model. However, given the small sample size, this may also be due to the test statistics variability in different models. Further,} it reveals a profound weakness of the parametric data augmentation in that different hypothesized models can result in very different conclusions made from the test. We also observe more rejections of the control subjects than the MNA suspected subjects when using the MNA diagnosed subjects as the null sample, which is desirable. Moreover, we observe more rejections when using \gls{cmmctest} compared to \gls{mmctest} in almost all cases, indicating the power improvements we expect from the \gls{cmmctest}.

In Figure~\ref{fig:graphical_sweat_gland} we show the global envelopes which gives a graphical interpretation, as discussed in Section~\ref{sec:methodology}, when using the sequential model with noise with the control subjects as the null sample. We use Storey's \gls{bh} procedure with $\lambda=0.5$ at level $\alpha=0.05$, the centered L-function, and the parallel \gls{erl} measure. The observed summary statistics are compared to the upper and lower envelopes with red dots indicating that the observation is more extreme than the null sample at the nominal level. Storey's estimate yields $\hat{m}_0 = 2$.
\begin{figure}[t]
    \centering
    \includegraphics[width=0.8\linewidth]{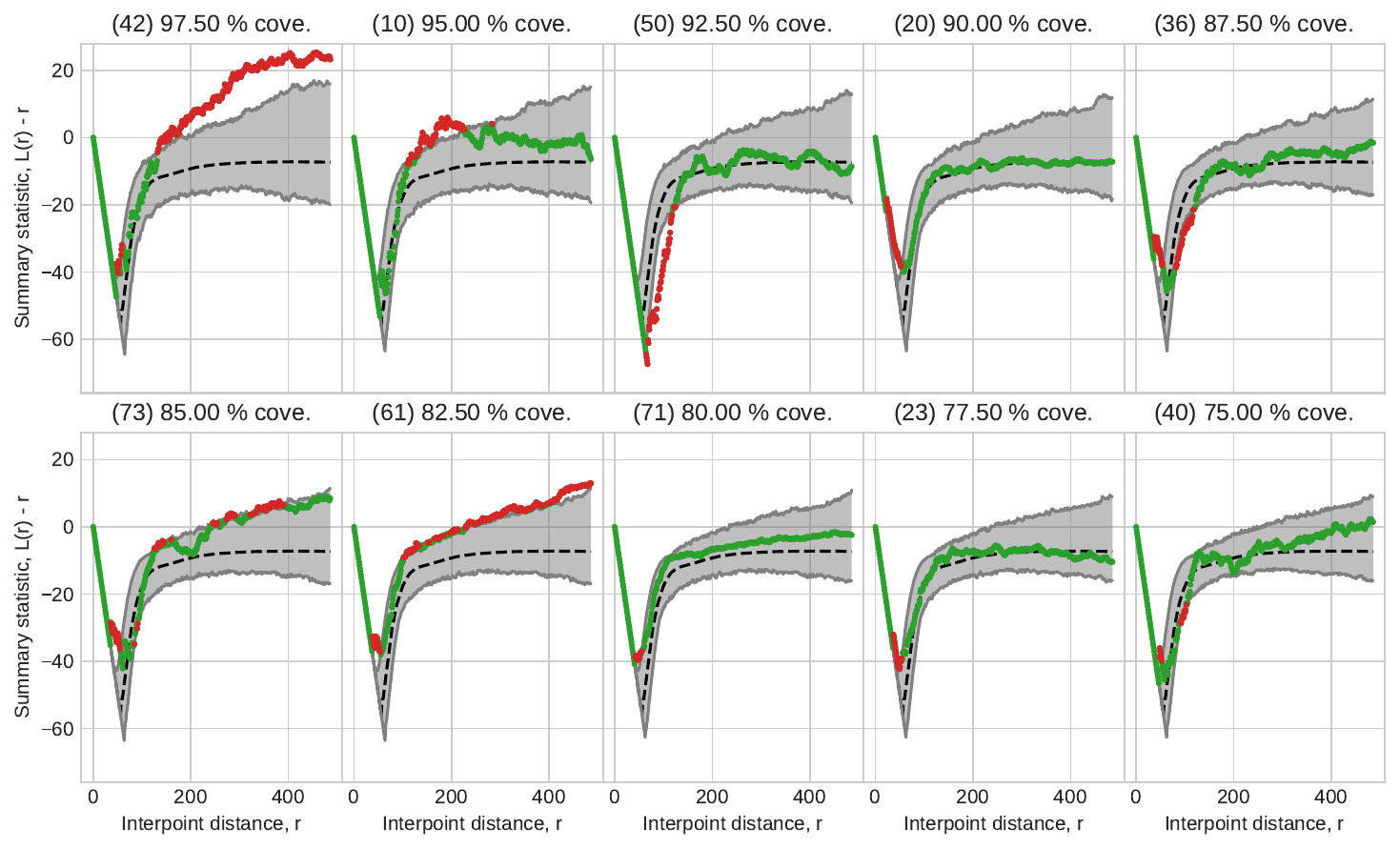}
    \vspace{-3mm}
    \caption{Graphical interpretation of rejection decisions for the sequential model with noise with control as null.}
    \label{fig:graphical_sweat_gland}
\end{figure}

Figure~\ref{fig:graphical_sweat_gland} shows that all the point patterns are rejected meaning that all the MNA suspected and diagnosed subjects are statistically significantly different from the control subjects, consistent with Figure~\ref{fig:sweat_gland_rejections}. The interesting part here is that we can interpret the reason for rejection: for all the point patterns (except subject 10) the summary statistic falls outside of the envelope at small values of $r$, while further some of the point patterns (including subjects 42, 10, and 61) breaks the upper envelope at higher values of $r$. This gives the interpretation that to make the rejection decision, we are exploiting the properties that the L-functions of the control subjects have a lower variability on the level of repulsion at small $r$, while tending to have repulsive/less clustered behavior at larger $r$.

\section{Conclusion}\label{sec:conclusion}
We proposed conformal multiple Monte Carlo testing (CMMCTest), a framework for multiple Monte Carlo testing using recent developments on conformal novelty detection. CMMCTest efficiently re-uses the null sample, cutting computational costs while maintaining rigorous control over \gls{fdr} or \gls{fwer}. We highlight the connection between conformal novelty detection and Barnard's Monte Carlo test, and investigate its use within the area of spatial statistics. Here functional ranking measures can be used to construct conformal scores yielding powerful multiple testing procedures. Simulations and a real-world application to sweat gland data showed CMMCTest's superior statistical power within a fixed computational budget.

Directions for future work include exploring the existence of more powerful multiple testing procedures controlling the \gls{fwer} with conformal p-values and non-parametric data augmentation techniques to avoid the problems with the parametric method.

\section{Acknowledgments}
The authors extend their gratitude to Ottmar Cronie and Etienne Roquain for fruitful discussions and give thanks to Aila Särkkä for discussions surrounding real data applications. \rev{A special mention goes to Pierre Neuvial who provided invaluable insights into conditions for familywise error rate control.}

% \bibliographystyle{elsarticle-harv.bst}
% \bibliography{mybib.bib}

\newpage
\appendix

\setcounter{figure}{0}
\setcounter{lemma}{0}
\renewcommand{\thelemma}{\Alph{section}.\arabic{lemma}}
\setcounter{theorem}{0}
\renewcommand{\thetheorem}{\Alph{section}.\arabic{theorem}}
\setcounter{definition}{0}
\renewcommand{\thedefinition}{\Alph{section}.\arabic{definition}}
\setcounter{corollary}{0}
\renewcommand{\thecorollary}{\Alph{section}.\arabic{corollary}}

\section{On the distribution of ordered conformal p-values}\label{sec:distribution_appendix}
We present in this section some new results regarding the joint distribution of the order statistics of the conformal p-values. First, we find the joint probability mass function.
\begin{lemma}\label{lem:conformal_joint_order}
    Let assumptions~\ref{A:invariance}-\ref{A:notie} be true for a conformal score $\hats$, and consider $m$ test points $\Xbs_{n+1}, \dots, \Xbs_{n+m}$ such that $(\Xbs_{n+i})_{i\in\mathcal{H}_0}$ are jointly independent of each other. Then, the joint probability mass function of the order statistics of the \pvals{}, denoted $\hat{\bs{p}}_{()}=[\hat{p}_{(1)},\dots,\hat{p}_{(m)}]^\top$, is
    \begin{equation}\label{eq:joint_order}
        \mathbb{P}(\hat{\bs{p}}_{()} = \bs{j}/(n+1)) = \mathbbm{1}[j_1\leq\cdots\leq j_m] \frac{n!m!}{(n+m)!},
    \end{equation}
    for $\bs{j}=[j_1,\dots,j_m]^\top$,
    i.e., the discrete uniform distribution on the integer order set
    \begin{equation*}
        \{(x_1,\dots,x_m):x_1\leq \cdots \leq x_m, x_j \in \{1/(n+1),\dots,1\},j=1,\dots,m\}.
    \end{equation*}
\end{lemma}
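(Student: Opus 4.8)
\textit{Proof plan.} The plan is to start from the closed form of the joint probability mass function of the \emph{unordered} conformal $p$-values and to marginalise it down to the order statistics by a purely combinatorial argument. Under assumptions~\ref{A:invariance}--\ref{A:notie} together with the assumed joint independence of the null test points, the scores $\hats_1,\dots,\hats_{n+m}$ are exchangeable, so $\hat{\bs{p}}=[\hat{p}_1,\dots,\hat{p}_m]^\top$ has the law~\eqref{eq:joint_mass} of \cite{Gazin2024:Transductive}, namely $\mathbb{P}(\hat{\bs{p}}=\bs{j}/(n+1)) = \frac{n!}{(n+m)!}\prod_{k=1}^{n+1}M_k(\bs{j})!$ for $\bs{j}\in\{1,\dots,n+1\}^m$. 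The single key observation is that the right-hand side depends on $\bs{j}$ only through the \emph{multiset} of its coordinates, i.e.\ through the multiplicities $M_1(\bs{j}),\dots,M_{n+1}(\bs{j})$, and is therefore invariant under every permutation of the coordinates of $\bs{j}$.

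First I would fix $\bs{j}=[j_1,\dots,j_m]^\top$ with $j_1\le\cdots\le j_m$ and $j_i\in\{1,\dots,n+1\}$, and write the event $\{\hat{\bs{p}}_{()}=\bs{j}/(n+1)\}$ as the disjoint union, over all distinct coordinate-rearrangements $\bs{j}'$ of $\bs{j}$, of the events $\{\hat{\bs{p}}=\bs{j}'/(n+1)\}$. The number of such distinct rearrangements is the multinomial coefficient $m!/\prod_{k=1}^{n+1}M_k(\bs{j})!$, and by the invariance noted above each of them carries probability $\frac{n!}{(n+m)!}\prod_{k=1}^{n+1}M_k(\bs{j})!$. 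Summing, the factorials cancel and $\mathbb{P}(\hat{\bs{p}}_{()}=\bs{j}/(n+1)) = \frac{n!\,m!}{(n+m)!}$; for $\bs{j}$ not weakly increasing the probability is $0$ by the definition of the order statistics, which is exactly~\eqref{eq:joint_order}. Since $\frac{n!\,m!}{(n+m)!}$ does not depend on $\bs{j}$, the law is uniform on the integer order set; as a consistency check that set has $\binom{n+m}{m}$ elements and $\binom{n+m}{m}\cdot\frac{n!\,m!}{(n+m)!}=1$.

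An alternative, self-contained route avoids~\eqref{eq:joint_mass} altogether: by~\ref{A:invariance}, \ref{A:notie} and exchangeability the ranks of $\hats_1,\dots,\hats_{n+m}$ form a uniformly random permutation of $\{1,\dots,n+m\}$, so the increasingly ordered ranks of the $m$ test scores form a uniformly random $m$-subset $r_{(1)}<\cdots<r_{(m)}$ of $\{1,\dots,n+m\}$. Since, by~\eqref{eq:marginal_conformal_p_values_continuous} and the absence of ties, the $p$-value attached to the test point whose score has the $\ell$-th smallest test rank equals $(1+(r_{(\ell)}-\ell))/(n+1)$, the map $r_{(\ell)}\mapsto j_\ell:=r_{(\ell)}-\ell+1$ is a bijection from the $m$-subsets of $\{1,\dots,n+m\}$ onto the integer order set, and uniformity transfers through it.

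I do not expect a genuine obstacle: the argument is bookkeeping. The two points that require care are (i) invoking~\eqref{eq:joint_mass} under the correct exchangeability hypothesis --- this is precisely where the joint-independence assumption on $(\Xbs_{n+i})_{i\in\Hr_0}$ enters --- and (ii) the permutation count together with the ensuing cancellation of $\prod_{k=1}^{n+1}M_k(\bs{j})!$; everything else is immediate. I would present the first route in the body and relegate the rank-based derivation to a remark.
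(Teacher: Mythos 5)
Your main argument is correct and essentially the paper's own proof: both start from the joint pmf \eqref{eq:joint_mass} and exploit its invariance under permutations of the coordinates of $\bs{j}$; where the paper cites a general order-statistics formula (Song 2024, Thm.\ 2.1.2) to produce the factor $m!/\prod_{k=1}^{n+1} M_k(\bs{j})!$, you obtain the same count directly by writing $\{\hat{\bs{p}}_{()}=\bs{j}/(n+1)\}$ as a disjoint union over the distinct rearrangements of $\bs{j}$, which is an equally valid bookkeeping of the identical cancellation. Your alternative rank-based route (uniformly random $m$-subset of overall ranks, bijection $j_\ell=r_{(\ell)}-\ell+1$ onto the integer order set) is a correct, self-contained bonus not in the paper, though, like the argument via \eqref{eq:joint_mass}, it implicitly uses that all $m$ test points are exchangeable with the null sample, i.e.\ the full-null setting under which the lemma's formula is meant.
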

\begin{proof}
    We know the joint probability mass function, see~\eqref{eq:joint_mass}, and notice that the joint probability mass function is invariant to permutations. This is exploited to find the joint distribution of the order statistics.
    Specifically, \cite{Song2024:Order} Theorem 2.1.2 gives that
    \begin{equation*}
        \mathbb{P}(\hat{\bs{p}}_{()} = \bs{j}/(n+1)) = \frac{\mathbbm{1}[j_1\leq\cdots\leq j_m]}{{\prod_{k=1}^{n+1} M_k(\bs{j})!}} \displaystyle\sum_{\bs{i}\in\mathcal{A}^m} \mathbb{P}\bigg(\bigcap_{k=1}^m\hat{p}_{i_k} = \frac{j_k}{n+1}\bigg),
    \end{equation*}
    where $\mathcal{A}^m$ is the set of arrangements of $\{1, \dots, m\}$.
    Noticing that the joint probability mass function is invariant to permutations, the sum can be replaced by a multiplication of $|\mathcal{A}^m| = m!$. Inserting the joint probability mass function, see~\eqref{eq:joint_mass}, yields~\eqref{eq:joint_order}.
\end{proof}
Below is a technical lemma which will be required later.
\begin{lemma}\label{lem:sum_rising_factorial}
    For $n \geq 1$ and $m \geq 0$ the following identity holds
    \begin{equation*}
        \sum_{y=1}^{n} \frac{(y+m-1)!}{(y-1)!} = \frac{(n+m)!}{(m+1)(n-1)!}.
    \end{equation*}
\end{lemma}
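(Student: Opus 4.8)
The plan is to recognize the summand as a rising factorial and reduce the claim to the classical hockey-stick identity. Writing out the ratio of factorials,
\begin{equation*}
    \frac{(y+m-1)!}{(y-1)!} = y(y+1)\cdots(y+m-1) = m!\,\binom{y+m-1}{m},
\end{equation*}
valid for all integers $y\geq 1$ and $m\geq 0$ (when $m=0$ both sides equal $1$). Hence
\begin{equation*}
    \sum_{y=1}^{n} \frac{(y+m-1)!}{(y-1)!} = m!\sum_{y=1}^{n}\binom{y+m-1}{m} = m!\sum_{k=m}^{n+m-1}\binom{k}{m},
\end{equation*}
after the reindexing $k = y+m-1$. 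First I would invoke the hockey-stick identity $\sum_{k=m}^{N}\binom{k}{m}=\binom{N+1}{m+1}$ with $N=n+m-1$, giving $m!\binom{n+m}{m+1}$. Then it only remains to simplify $m!\binom{n+m}{m+1}=\dfrac{m!\,(n+m)!}{(m+1)!\,(n-1)!}=\dfrac{(n+m)!}{(m+1)(n-1)!}$, which is exactly the right-hand side.

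As an alternative route, entirely self-contained, I would use induction on $n$ (with $m$ fixed). The base case $n=1$ reads $m! = (m+1)!/(m+1) = m!$. For the inductive step, adding the $(n+1)$-st term $\frac{(n+m)!}{n!}$ to the assumed value $\frac{(n+m)!}{(m+1)(n-1)!}$ and factoring out $\frac{(n+m)!}{n!}$ produces $\frac{(n+m)!}{n!}\bigl(\frac{n}{m+1}+1\bigr)=\frac{(n+m)!}{n!}\cdot\frac{n+m+1}{m+1}=\frac{(n+m+1)!}{(m+1)\,n!}$, which is the target formula with $n$ replaced by $n+1$. A third variant is to exhibit the telescoping antiderivative $F(y)=\frac{1}{m+1}(y-1)y\cdots(y+m-1)$, for which $F(y+1)-F(y)=\frac{(y+m-1)!}{(y-1)!}$ and $F(1)=0$, so the sum collapses to $F(n+1)=\frac{(n+m)!}{(m+1)(n-1)!}$.

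There is no genuine obstacle here: the statement is a routine combinatorial identity. The only point demanding care is the index bookkeeping around the factorials—in particular, if one uses the telescoping form, the term $F(1)$ nominally involves a factor $(1-1)=0$ (equivalently a $(-1)!$ in factorial notation), and one must note that this term vanishes rather than being undefined. I would therefore state $F$ explicitly as a product of $m+1$ consecutive integers to sidestep any ambiguity. I expect to present the hockey-stick argument as the main proof for brevity, with the induction available as a fallback should a fully elementary derivation be preferred.
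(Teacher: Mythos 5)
Your main argument is correct and is essentially the paper's own proof: both divide the summand by $m!$ to recognize a binomial coefficient and then apply the hockey-stick identity, followed by routine simplification. The induction and telescoping variants you sketch are fine fallbacks but add nothing needed beyond the paper's route.
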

\begin{proof}
    Dividing by $m!$ on the left hand side yields
    \begin{equation*}
        \sum_{y=1}^{n} \frac{(y+m-1)!}{(y-1)!m!} = \sum_{y=1}^{n} \binom{y-1+m}{m} = \binom{n+m}{m+1} = \frac{1}{m!} \frac{(n+m)!}{(m+1)(n-1)!}
    \end{equation*}
    where the second equality is the Hockey Stick Identity, see \cite{Jones1996:Hockey}.
\end{proof}
Equipped with the joint distribution, we can also derive the marginals.
\begin{corollary}\label{cor:conformal_marginal_order}
    Let assumptions~\ref{A:invariance}-\ref{A:notie} be true for a conformal score $\hats$, and consider $m$ test points $\Xbs_{n+1}, \dots, \Xbs_{n+m}$ such that $(\Xbs_{n+i})_{i\in\mathcal{H}_0}$ are jointly independent of each other. Then, the marginal probability mass function of the $i$-th smallest p-value is
    \begin{align*}
        \mathbb{P}(\hat{p}_{(i)}=j/(n+1)) 
        &= \frac{n!m!}{(n+m)!} \binom{j+i-2}{i-1}\binom{n+m-j-i+1}{\rev{m}-i},%\\
    \end{align*}
    {which is the probability mass function of the negative hypergeometric distribution.}
\end{corollary}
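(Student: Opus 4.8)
The plan is to marginalize the joint distribution from Lemma~\ref{lem:conformal_joint_order}. Since $\hat{\bs{p}}_{()}$ is uniform on the integer order set $\{1 \leq j_1 \leq \cdots \leq j_m \leq n+1\}$ with constant mass $n!m!/(n+m)!$, computing $\mathbb{P}(\hat{p}_{(i)} = j/(n+1))$ amounts to counting the number of nondecreasing integer tuples $(j_1,\dots,j_m)$ in $\{1,\dots,n+1\}^m$ with $j_i = j$, and multiplying by $n!m!/(n+m)!$.

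First I would split the counting into the ``left block'' $(j_1,\dots,j_{i-1})$ and the ``right block'' $(j_{i+1},\dots,j_m)$. Given $j_i = j$, the left block must be a nondecreasing tuple of length $i-1$ with values in $\{1,\dots,j\}$, and the right block must be a nondecreasing tuple of length $m-i$ with values in $\{j,\dots,n+1\}$. The number of nondecreasing tuples of length $\ell$ with values in a set of size $N$ is the multiset coefficient $\binom{N+\ell-1}{\ell}$. Here the left block has $N = j$ and $\ell = i-1$, giving $\binom{j+i-2}{i-1}$; the right block has $N = n+1-j+1 = n-j+2$ and $\ell = m-i$, giving $\binom{n-j+2+m-i-1}{m-i} = \binom{n+m-j-i+1}{m-i}$. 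Multiplying the two counts and the uniform mass yields exactly the claimed formula.

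The remaining task is to recognize this as the negative hypergeometric probability mass function; this is a matter of matching the standard parametrization (drawing without replacement until the $i$-th ``success'' from a population of $n+m$ containing $m$ items of one type), and can be done by direct inspection of the binomial coefficients or by citing the standard form. I expect no substantive obstacle here: the only place requiring care is getting the size of the value-range for the right block correct (it is $\{j, j+1, \dots, n+1\}$, which has $n-j+2$ elements, not $n-j+1$), and correspondingly the index bookkeeping in the multiset coefficients. One could alternatively derive the marginal by applying Lemma~\ref{lem:sum_rising_factorial} to sum out the joint over the other coordinates one at a time, but the block-counting argument via multiset coefficients is cleaner and avoids the repeated rising-factorial summations.
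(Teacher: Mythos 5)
Your proposal is correct and follows essentially the same route as the paper: both proofs marginalize the uniform joint distribution of Lemma~\ref{lem:conformal_joint_order} by fixing $\hat{p}_{(i)}=j/(n+1)$ and counting the nondecreasing left block in $\{1,\dots,j\}$ and right block in $\{j,\dots,n+1\}$ separately. The only difference is cosmetic: you evaluate each lattice-point count in one step via the multiset (stars-and-bars) coefficient, whereas the paper writes the count as iterated sums and evaluates them by repeated application of Lemma~\ref{lem:sum_rising_factorial}; your block-size bookkeeping (range of size $n-j+2$ giving $\binom{n+m-j-i+1}{m-i}$) is correct, so both yield the same formula.
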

\begin{proof}
    We compute directly the marginal by summing over the joint:
    \begin{equation*}
        \mathbb{P}(\hat{p}_{(i)}=j/(n+1)) = \frac{n!m!}{(n+m)!} \bigg(\sum_{y_{i-1}=1}^{j}\cdots\sum_{y_2=1}^{y_3}\sum_{y_1=1}^{y_2} 1\bigg) \bigg(\sum_{y_{m}=j}^{n+1}\cdots\sum_{y_{i+2}=j}^{y_{i+3}}\sum_{y_{i+1}=j}^{y_{i+2}} 1\bigg).
    \end{equation*}
    The result follows immediately by evaluating the sums. Specifically, computing the first sums
    \begin{equation*}
        \sum_{y_{i-1}=1}^{j}\cdots\sum_{y_2=1}^{y_3}\sum_{y_1=1}^{y_2} 1 = \sum_{y_{i-1}=1}^{j}\cdots\sum_{y_2=1}^{y_3} y_2 = \sum_{y_{i-1}=1}^{j}\cdots\sum_{y_2=1}^{y_3} \frac{y_2!}{(y_2-1)!}.
    \end{equation*}
    Applying Lemma~\ref{lem:sum_rising_factorial} repeatedly gives
    \begin{equation*}
        \sum_{y_{i-1}=1}^{j}\cdots\sum_{y_2=1}^{y_3}\sum_{y_1=1}^{y_2} 1 = \sum_{y_{i-1}=1}^{j}\cdots\sum_{y_3=1}^{y_4} \frac{(y_3+1)!}{2(y_3-1)!}=\frac{(j+i-\rev{2})!}{(j-1)!(i-1)!}.
    \end{equation*}
    The other sums are evaluated by the same arguments.
\end{proof}

\rev{Through this result, we have as a special case the distribution of the smallest conformal p-value. We can then define a procedure that rejects $H_0^j$ when $p_j \leq \hat{t}$ for a threshold specified as the largest threshold $t$ such that $\mathbb{P}(\hat{p}_{(1)} \leq t) \leq \alpha$. By Corollary~\ref{cor:conformal_marginal_order}, 
\begin{equation*}
    \mathbb{P}(\hat{p}_{(1)} \leq t) = F_{\rm nhg}(\lfloor t(n+1)\rfloor -1; n+m, n, 1),
\end{equation*}
where $F_{\rm nhg}$ is the cumulative distribution function of a negative hypergeometric distribution. Let then $\hat{t} = \frac{F^{-1}_{\rm nhg}(\lfloor \alpha(n+1)\rfloor; n+m, n, 1) + 1}{n+1}$ where $F^{-1}_{\rm nhg}$ is the quantile function of a negative hypergeometric distribution, in which case $\mathbb{P}(\hat{p}_{(1)} \leq \hat{t}) = \lfloor \alpha(n+1)\rfloor/(n+1) \leq \alpha$. This insight is due to a discussion with Pierre Neuvial.}

Unfortunately, working with the marginals will not be sufficient in the following.

Given the explicit joint probability mass function, we present here a direct proof, which is arguably simpler than that of \cite{Angelopoulos2024:Theoretical}, that the {\rev{$\check{\rm S}$id\'{a}k}} procedure rejecting at level $1 - (1-\alpha)^{1/m}$ controls the \gls{fwer} at level $\alpha$.

\begin{theorem}
    Let assumptions~\ref{A:invariance}-\ref{A:notie} be true for a conformal score $\hats$, and consider $m$ test points $\Xbs_{n+1}, \dots, \Xbs_{n+m}$ such that $(\Xbs_{n+i})_{i\in\mathcal{H}_0}$ are jointly independent of each other. Then, the conformal \pvals{} satisfies
    \begin{equation*}
        \mathbb{P}\left(\bigcup_{j=1}^m \left\{\hat{p}_{j} \leq 1 - (1-\alpha)^{1/m}\right\}\right) \leq \alpha.
    \end{equation*}
    Additionally, as $n \to \infty$, $\mathbb{P}\left(\bigcup_{j=1}^m \left\{\hat{p}_{j} \leq 1 - (1-\alpha)^{1/m}\right\}\right) = \alpha$.
\end{theorem}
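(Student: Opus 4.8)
The plan is to collapse the union event into a statement about the smallest conformal \pval{} $\hat p_{(1)}$, whose distribution is pinned down by Lemma~\ref{lem:conformal_joint_order}. Since the rejection threshold $t=1-(1-\alpha)^{1/m}$ is only ever compared against the grid $\{1/(n+1),\dots,1\}$, set $j_0=\lfloor t(n+1)\rfloor$; then for each $j$ the event $\{\hat p_j\le t\}$ equals $\{\hat p_j\le j_0/(n+1)\}$, so $\bigcup_{j=1}^m\{\hat p_j\le t\}=\{\hat p_{(1)}\le j_0/(n+1)\}$, and it suffices to show $\mathbb{P}(\hat p_{(1)}>j_0/(n+1))\ge 1-\alpha$. (If genuine alternatives are present and one only wants {\gls{fwer}} control, one first conditions on $(\Xbs_{n+i})_{i\in\Hr_1}$: by~\ref{A:invariance}--\ref{A:exchangeability} the null scores are then exchangeable with the calibration scores, so Lemma~\ref{lem:conformal_joint_order} applies to the $m_0\le m$ null \pvals{}, and since the bound below is monotone in the number of \pvals{}, the case $m_0=m$ is the worst one.)

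Next I would evaluate $\mathbb{P}(\hat p_{(1)}>j_0/(n+1))$ directly from the discrete-uniform description in Lemma~\ref{lem:conformal_joint_order}. The integer order set carrying the mass has $\binom{n+m}{m}$ points, while the sub-order-set on which every coordinate is at least $(j_0+1)/(n+1)$ is the order set generated by the $n+1-j_0$ grid values $\{j_0+1,\dots,n+1\}$, hence has $\binom{n+m-j_0}{m}$ points. Cancelling factorials,
\[
\mathbb{P}\!\left(\hat p_{(1)}>\tfrac{j_0}{n+1}\right)=\frac{\binom{n+m-j_0}{m}}{\binom{n+m}{m}}=\prod_{\ell=1}^{m}\frac{n-j_0+\ell}{n+\ell},
\]
which is exactly the upper tail of the negative hypergeometric law from Corollary~\ref{cor:conformal_marginal_order}.

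The key step is then a term-by-term bound: for each $\ell\ge 1$,
\[
\frac{n-j_0+\ell}{n+\ell}=1-\frac{j_0}{n+\ell}\ge 1-\frac{j_0}{n+1}\ge 1-t=(1-\alpha)^{1/m},
\]
using $j_0=\lfloor t(n+1)\rfloor\le t(n+1)$. Multiplying over $\ell=1,\dots,m$ gives $\mathbb{P}(\hat p_{(1)}>j_0/(n+1))\ge 1-\alpha$, i.e.\ $\mathbb{P}\big(\bigcup_{j=1}^m\{\hat p_j\le t\}\big)\le\alpha$, the first claim. For the asymptotic statement, fix $\ell$ and note $j_0/(n+\ell)=\lfloor t(n+1)\rfloor/(n+\ell)\to t$ as $n\to\infty$, so each of the finitely many factors converges to $1-t$ and the product converges to $(1-t)^m=1-\alpha$; hence the probability tends to $1-(1-\alpha)=\alpha$. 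The computation is essentially routine once Lemma~\ref{lem:conformal_joint_order} is available; the only points requiring care are passing from the continuous threshold to the grid index $j_0$ via the floor (and checking $j_0/(n+1)\le t$ makes the product bound conservative in the right direction) and, if one insists on keeping true alternatives while claiming only {\gls{fwer}} control, the conditioning reduction together with the monotonicity-in-$m_0$ observation so that the exchangeable case handled by Lemma~\ref{lem:conformal_joint_order} is genuinely the worst case.
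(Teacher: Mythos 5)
Your proof is correct and follows essentially the same route as the paper: both reduce the union to the event that the smallest $p$-value exceeds the grid threshold, evaluate that probability from the uniform law on the order set in Lemma~\ref{lem:conformal_joint_order} (your binomial-coefficient count $\binom{n+m-j_0}{m}/\binom{n+m}{m}$ is just a closed form of the paper's nested sums, yielding the same product $\prod_{\ell=1}^m \frac{n-j_0+\ell}{n+\ell}$), and then bound each factor from below by $(1-\alpha)^{1/m}$ using the floor inequality, with the same limiting argument for asymptotic exactness. The parenthetical conditioning-on-alternatives and monotonicity-in-$m_0$ remark is a sound extra refinement that the paper does not include, but it does not change the core argument.
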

\begin{proof}
    From Lemma~\ref{lem:conformal_joint_order}, the joint probability mass function of the ordered conformal \pvals{} is
    \begin{equation*}
        \mathbb{P}(\hat{\bs{p}}_{()} = \bs{j}/(n+1)) = \mathbbm{1}[j_1\leq\cdots\leq j_m] \frac{n!m!}{(n+m)!}.
    \end{equation*}
    We have that ${\rm FWER} = \mathbb{P}\big(\bigcup_{j=1}^m \hat{p}_{j} \leq 1 - (1-\alpha)^{1/m} \big) = 1 - \mathbb{P}\big(\bigcap_{j=1}^m \hat{p}_{(j)} > 1 - (1-\alpha)^{1/m} \big)$.
    Let $t = \lfloor(1 - (1-\alpha)^{1/m})(n+1)\rfloor$ denote the \pvals{} threshold, and compute the \gls{fwer}:
    \begin{equation*}
        {\rm FWER} = 1 - \frac{n!m!}{(n+m)!}\sum_{y_m=t+1}^{n+1}\sum_{y_{m-1}=t+1}^{y_m}\cdots \sum_{y_1=t+1}^{y_2} 1 = 1 - \frac{n!}{(n+m)!} \prod_{k=0}^{m-1} (n+1-t+k)
    \end{equation*}
    which follows from the same arguments as in the proof of Corollary~\ref{cor:conformal_marginal_order}. Then, by inserting the expression for $t$
    \begin{align*}
        {\rm FWER} &= 1 - \frac{n!}{(n+m)!} \prod_{k=0}^{m-1} (n+1-\lfloor(1 - (1-\alpha)^{1/m})(n+1)\rfloor+k)\\
        &\leq 1 - \frac{n!}{(n+m)!} \prod_{k=0}^{m-1} ((1-\alpha)^{1/m}(n+1)+k)\\
        &\leq 1 - \frac{n!}{(n+m)!} (1-\alpha) \prod_{k=0}^{m-1} (n+1+k) = \alpha.
    \end{align*}
    The errors made in the inequalities goes to zero as $n \to \infty$, concluding the proof.
\end{proof}

As mentioned in the paper, if the \pvals{} are ${\rm MTP}_2$, then the Hochberg procedure controls the FWER. We will show now that the conformal \pvals{} indeed are not ${\rm MTP}_2$, but first we define the notion of ${\rm MTP}_2$.
\begin{definition}[${\rm MTP}_2$ \citep{Benjamini2001:Dependency}] \label{def:mtp2}
    A random vector $\bs{X}$ is ${\rm MTP}_2$ if for all $\bs{x}$, $\bs{y}$ we have
    \begin{equation*}
        f(\bs{x}) f(\bs{y}) \leq f({\rm min}(\bs{x}, \bs{y})) f({\rm max}(\bs{x}, \bs{y}))
    \end{equation*}
    where the minimum and maximum is component wise and $f$ is either the joint density or the joint probability distribution of $\bs{X}$.
\end{definition}
\begin{theorem}
\label{thm:not_mtp2}
    Let assumptions~\ref{A:invariance}-\ref{A:notie} be true for a conformal score $\hats$, and consider $m$ test points $\Xbs_{n+1}, \dots, \Xbs_{n+m}$ such that $(\Xbs_{n+i})_{i\in\mathcal{H}_0}$ are jointly independent of each other. Then, the conformal \pvals{} are not ${\rm MTP}_2$.
\end{theorem}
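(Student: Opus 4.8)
The plan is to disprove the ${\rm MTP}_2$ property by an explicit counterexample built directly from the joint probability mass function~\eqref{eq:joint_mass}. Write $\bs j=(j_1,\dots,j_m)$ for a vector of ranks in $\{1,\dots,n+1\}^m$, so that (under the null, with all test points independent) the pmf of $(n+1)\hat{\bs p}$ is proportional to $\prod_{k=1}^{n+1}M_k(\bs j)!$. Taking logarithms, ${\rm MTP}_2$ of the conformal \pvals{} is equivalent to supermodularity of $g(\bs j)=\sum_{k=1}^{n+1}\log\big(M_k(\bs j)!\big)$ on the lattice $\{1,\dots,n+1\}^m$ with the componentwise order. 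Hence it suffices to exhibit $\bs x,\bs y$ for which
\[
 g(\bs x)+g(\bs y) > g(\bs x\wedge\bs y)+g(\bs x\vee\bs y),
\]
where $\wedge$ and $\vee$ are componentwise minimum and maximum; after cancelling the common constant $n!/(n+m)!$ this is exactly a failure of the ${\rm MTP}_2$ inequality for $f$.

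The guiding heuristic is that $g$ is large precisely when the ranks form ``clumps'' of equal values --- which is what makes one suspect ${\rm MTP}_2$ in the first place --- so the trick is to break a clump that is shared by two \emph{incomparable} vectors. Assuming $n\ge 2$ (so ranks $1,2,3$ are all attainable), I would take $\bs x=(2,2,\dots,2)$ and $\bs y=(1,3,2,\dots,2)$, whence $\bs x\wedge\bs y=(1,2,\dots,2)$ and $\bs x\vee\bs y=(2,3,2,\dots,2)$. A direct count of multiplicities gives $g(\bs x)=\log(m!)$, $g(\bs y)=\log((m-2)!)$, and $g(\bs x\wedge\bs y)=g(\bs x\vee\bs y)=\log((m-1)!)$, so the required strict inequality reduces to $m!\,(m-2)!>\big((m-1)!\big)^2$, i.e.\ to $m>m-1$, which always holds. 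Translating back through~\eqref{eq:joint_mass}, the ratio $f(\bs x)f(\bs y)\big/\big(f(\bs x\wedge\bs y)f(\bs x\vee\bs y)\big)$ equals $m/(m-1)>1$, contradicting ${\rm MTP}_2$.

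The only genuinely non-routine step is locating this configuration: although the clumping mechanism in~\eqref{eq:joint_mass} looks like a benign positive-dependence structure (and conformal \pvals{} are indeed positively correlated by Lemma~1 of \cite{Bates2023:Testing}), it is destroyed exactly when a tie carried jointly by two incomparable vectors is split under \emph{both} the meet and the join; once that is seen, the verification is a one-line factorial identity. I would also remark that the construction works for every $m\ge 2$ and every $n\ge 2$, and for $m=2$, $n=2$ it specializes to the concrete pair $\hat{\bs p}=(2/3,2/3)$ versus $\hat{\bs p}=(1/3,1)$, with $f$-values $\tfrac16$ and $\tfrac1{12}$ against $\tfrac1{12}$ and $\tfrac1{12}$ for the meet and join, so that $\tfrac1{72}>\tfrac1{144}$.
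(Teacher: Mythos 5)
Your proof is correct and takes essentially the same approach as the paper: an explicit counterexample to the ${\rm MTP}_2$ inequality computed directly from the joint probability mass function~\eqref{eq:joint_mass}. The only difference is that you exhibit a parameterized family of counterexamples valid for all $m\ge 2$, $n\ge 2$, whereas the paper uses the single instance $n=2$, $m=3$ with $\bs{x}=(2/3,2/3,2/3)$ and $\bs{y}=(1/3,1,1/3)$; your multiplicity counts and the factorial inequality $m!\,(m-2)!>\big((m-1)!\big)^2$ are all verified.
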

\begin{proof}
    We know the joint probability mass function, see~\eqref{eq:joint_mass}. As a counter example, consider a case where $n=2$ and $m=3$, and let $\bs{x} = [2/3, 2/3, 2/3]^\top$ and $\bs{y} = [1/3, 1, 1/3]^\top$. Then $f(\bs{x}) = 6$ and $f(\bs{y}) = 2$, while $f({\rm min}(\bs{x}, \bs{y})) = 2$ and $f({\rm max}(\bs{x}, \bs{y})) = 2$, showing that $f(\bs{x})f(\bs{y}) > f({\rm min}(\bs{x}, \bs{y})) f({\rm max}(\bs{x}, \bs{y}))$ in this case, and thereby disproving the ${\rm MTP}_2$ property.
\end{proof}

We are also interested in expressing the joint survival function of the ordered conformal \pvals{}, when a sequence of thresholds $t_1\leq t_2\leq \cdots \leq t_m$ is used. To do so, requires summing over the joint probability mass function which amounts to counting grid points on a discrete order set.

\begin{lemma}\label{lem:grid_poly}
    Given $t_1\leq\dots\leq t_m\in\{0,\dots,n\}$,
    let $P_{m-1}[y] = \sum_{i=0}^{m-1} a_i^{(m-1)} y^i$, $y \in \mathbb{N}$, be $(m-1)$-th order polynomials, defined recursively as $P_m[x] = \sum_{y=t_m+1}^{x} P_{m-1}[y]$ with initial conditions $P_{0}[y] = 1$. 
    Then, $P_{m}[n+1]$ is the number of grid points on (cardinality of) the discrete order set $\{(x_1,\dots,x_m):x_1\leq \cdots \leq x_m, x_j \in \{(t_j+1)/(n+1),\dots,1\},j=1,\dots,m\}$.
    Moreover, the polynomial coefficients are defined recursively as
    \begin{align}
        a_0^{(m)} &= -\sum_{j=0}^{m-1} \frac{a_j^{(m-1)}}{j+1} \sum_{r=0}^j \binom{j+1}{r} B_r t_m^{j+1-r},\label{eq:a0_formula}\\
        a_i^{(m)} &= \sum_{j=i-1}^{m-1} \frac{a_j^{(m-1)}}{j+1} \binom{j+1}{i} B_{j+1-i}, \quad 0 < i \leq m,\label{eq:ai_formula}
    \end{align}
    where $B_r$ is the $r$-th Bernoulli number (see \cite{Apostol2008:Bernoulli}) with the convention $B_1 = 1/2$.
\end{lemma}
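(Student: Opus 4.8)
The statement has two parts: the combinatorial identification of $P_m[n{+}1]$ with the cardinality of the discrete order set, and the explicit recursion for the coefficients $a_i^{(m)}$. I would prove the first by induction on $m$ and then obtain the second by applying a power-sum (Faulhaber) formula to the defining recursion, with an easy simultaneous induction confirming that each $P_m$ really is a degree-$m$ polynomial.

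For the counting claim, the plan is to prove the slightly strengthened statement: for every integer $x$, $P_{m-1}[x]$ equals the number of integer tuples $(\ell_1,\dots,\ell_{m-1})$ with $\ell_j \geq t_j+1$ for each $j$ and $\ell_1\leq\cdots\leq\ell_{m-1}\leq x$. The base case $m=1$ is immediate, since $P_1[x]=\sum_{y=t_1+1}^{x}1 = x-t_1$ counts $\{t_1+1,\dots,x\}$. For the inductive step I peel off the largest coordinate: writing $\ell_m=y$, the recursion $P_m[x]=\sum_{y=t_m+1}^{x}P_{m-1}[y]$ sums, over each admissible value $y\in\{t_m+1,\dots,x\}$ of $\ell_m$, the number of tuples $(\ell_1,\dots,\ell_{m-1})$ satisfying $\ell_j\geq t_j+1$ and $\ell_1\leq\cdots\leq\ell_{m-1}\leq y$, which by the induction hypothesis is $P_{m-1}[y]$; note that $y\geq t_m+1\geq t_{m-1}+1$, so the hypothesis applies at every argument at which we evaluate $P_{m-1}$. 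Setting $x=n+1$ and rescaling the numerators by $n+1$ yields precisely the order set in the statement.

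For the coefficient recursion, I would write $\sum_{y=t_m+1}^{x}P_{m-1}[y]=\sum_{y=1}^{x}P_{m-1}[y]-\sum_{y=1}^{t_m}P_{m-1}[y]$, expand $P_{m-1}[y]=\sum_{j=0}^{m-1}a_j^{(m-1)}y^j$, and apply Faulhaber's formula $\sum_{y=1}^{N}y^j=\frac{1}{j+1}\sum_{r=0}^{j}\binom{j+1}{r}B_r N^{j+1-r}$, valid with the convention $B_1=1/2$. The second sum is constant in $x$; the first is a polynomial in $x$ whose lowest-degree monomial is $x^1$, because the $r=j+1$ term that would produce $x^0$ is absent from Faulhaber's formula. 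Hence the only contribution to $a_0^{(m)}$ comes from the subtracted constant, giving \eqref{eq:a0_formula}. For $i\geq 1$, extracting the coefficient of $x^i$ forces $j+1-r=i$, i.e.\ $r=j+1-i$, which lies in $\{0,\dots,j\}$ exactly when $j\geq i-1$; summing over such $j$ and using $\binom{j+1}{j+1-i}=\binom{j+1}{i}$ gives \eqref{eq:ai_formula}. Comparing the leading term also shows $P_m$ has degree $m$, completing the induction on polynomiality.

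The Faulhaber substitution and the index bookkeeping are routine. The main point requiring care — and the chief obstacle — is framing the induction hypothesis for the counting claim correctly: because the lower bounds $t_1\leq\cdots\leq t_m$ differ across coordinates, one cannot collapse the per-coordinate constraints into a single bound on the smallest coordinate, so each constraint $\ell_j\geq t_j+1$ must be retained, and one must verify that the recursion only ever evaluates $P_{m-1}$ at arguments $\geq t_{m-1}+1$, where the combinatorial interpretation of the polynomial is legitimate.
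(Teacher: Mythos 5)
Your proposal is correct and follows essentially the same route as the paper's proof: you identify $P_m[n+1]$ with the iterated sum over the order set, split each inner sum via $\sum_{y=t_m+1}^{x}=\sum_{y=1}^{x}-\sum_{y=1}^{t_m}$, and apply Faulhaber's formula (with $B_1=1/2$) to extract the coefficient recursion. The only difference is that you spell out the counting identification by an explicit induction on $m$ and the coefficient bookkeeping (the range $j\geq i-1$ and $\binom{j+1}{j+1-i}=\binom{j+1}{i}$), which the paper compresses into ``collecting the terms.''
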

\begin{proof}
    Consider the sum $P_m[n+1]=\sum_{y_m=t_m+1}^{n+1} \sum_{y_{m-1}=t_{m-1}+1}^{y_m} \cdots \sum_{y_1 = t_1 + 1}^{y_2} 1$, which is the sum over the discrete order set. Initially, we have $P_0[y_1] = 1$, followed by $P_1[y_2] = \sum_{y_1 = t_1 + 1}^{y_2} 1 = y_2 - t_1$. Faulhaber's formula (for a brief primer see \cite{Larson2019:Bernoulli}) states that
    \begin{equation*}
        \sum_{k=1}^n k^p = \frac{1}{p+1} \sum_{r=0}^p \binom{p+1}{r} B_r n^{p+1-r},
    \end{equation*}
    meaning that the sum over a polynomial of order $m-1$ is a polynomial of order $m$. By an inclusion-exclusion argument $\sum_{k=t+1}^n k^p = \sum_{k=1}^n k^p - \sum_{k=1}^t k^p$ we can handle the varying starting points of the sums. Now, by the recursive definition
    \begin{align*}
        P_m[n+1] &= \sum_{i=0}^{m-1} \Big(\sum_{y=1}^{n+1} a_i^{(m-1)} y^i - \sum_{y=1}^{t_m} a_i^{(m-1)} y^i \Big)\\
        &=\sum_{i=0}^{m-1} \frac{a_i^{(m-1)}}{i+1} \sum_{r=0}^i \binom{i+1}{r} B_r \Big((n+1)^{i+1-r} - t_m^{i+1-r}\Big).
    \end{align*}
    Collecting the terms yields the formulas in~\eqref{eq:a0_formula}-\eqref{eq:ai_formula}.
\end{proof}
This result motivates an algorithm for computing the number of grid points on the discrete order set which is only of $O(m^3)$, a drastic improvement on a naïve approach which is $O(n^m)$. In turn, this allows us to efficiently compute the joint survival function $\mathbb{P}\big(\bigcap_{j=1}^m \hat{p}_{(j)} > t_j/(n+1)\big) = \frac{n!m!}{(n+m)!} P_m[n+1]$, and then applying De Morgan's laws ${\rm FWER} = 1 - \frac{n!m!}{(n+m)!} P_m[n+1]$, meaning that for a given sequence of thresholds $t_j$, $j=1,\dots,m$, we can compute exactly the ${\rm FWER}$. Notably, for the Hochberg procedure, $t_j = \lfloor \frac{\alpha}{m - j + 1} (n+1)\rfloor$, while for the {\rev{$\check{\rm S}$id\'{a}k}} procedure it is $t_j = \lfloor (1 - (1-\alpha)^{1/m}) (n+1)\rfloor$. When $m > m_0$, the {\rev{$\check{\rm S}$id\'{a}k}} procedure can be further sharpened by first estimating $\hat{m}_0$ with a liberal estimate (for instance Storey's estimate), and then using $t_j = \lfloor (1 - (1-\alpha)^{1/\hat{m}_0}) (n+1)\rfloor$.

\section{Testing a global null hypothesis}\label{sec:global_test_appendix}
\begin{figure}[t]
    \centering
    \includegraphics[width=0.9\linewidth, page=15]{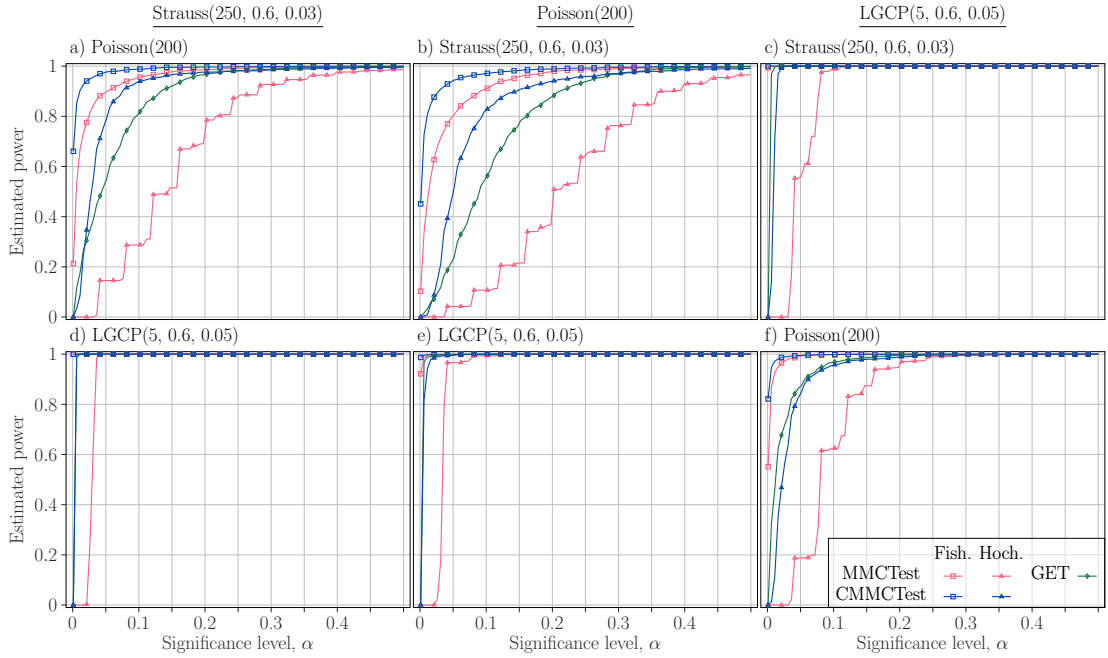}
    \caption{Estimated power curves against the significance level, $\alpha$, when testing a global null hypothesis. At the top of the figure, the null distribution is underlined, while the non-true null distribution is read above each individual plot. All the plots share the same legend. Fish.~and Hoch.~are shorthand for Fisher combination test and Hochberg procedure, respectively.}
    \label{fig:power_curves_global}
\end{figure}
\begin{figure}[t]
    \centering
    \includegraphics[width=0.95\linewidth, page=14]{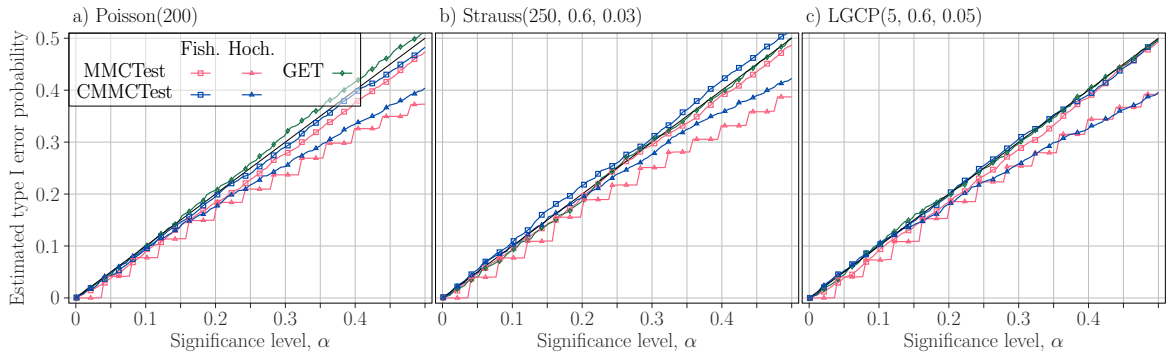}
    \caption{Estimated type I error probability curves against the significance level, $\alpha$, when testing a global null hypothesis. The black straight line shows the nominal level. All the plots share the same legend. Fish.~and Hoch.~are shorthand for Fisher combination test and Hochberg procedure, respectively.}
    \label{fig:FWER_curves_global}
\end{figure}
Here we consider testing the global null hypothesis $H_0: P_j = P_0$, for all $j=1,\dots,m$. In this case, we make no assertions regarding individual point patterns and instead test for group effects. We are interested in controlling the type I error probability at a specified significance level $\alpha \in (0, 1)$. This is the multiple testing scenario of \cite{Mrkvicka2017:Multiple} where \gls{get} is used with the concatenated functional summary statistics of the test points.
Specifically, let $T_1, \dots, T_m$ denote the functional summary statistics of the test data, then the functions are concatenated as $T = [T_1, \dots, T_m]^\top$, and a \gls{get} is made considering $T$ to be the functional summary statistic of the test data. This procedure controls the type I error probability at the nominal level. The null sample is constructed by concatenating $m$ functional summary statistics among the $n$ that were simulated at a time, resulting in a total of $n/m$ independent concatenated functional summary statistics.

In light of the methodology proposed in this work, other approaches can be taken. An immediate idea is to run the Hochberg procedure on either the \gls{mmctest} \pvals{} or the \gls{cmmctest} \pvals{} and rejecting the global null $H_0$ if one or more of the local null hypotheses are rejected. A more powerful method might be to use a combination test, for instance a Fisher combination test \cite{Fisher1925:Statistical}. The Fisher combination test works for the independent \pvals{} of the \gls{mmctest}. Theorem 2.2 of \cite{Bates2023:Testing} shows that a corrected Fisher combination test controls the type I error probability when using conformal \pvals{}, which we expect to be a very powerful method considering the power gains observed so far with the \gls{cmmctest} compared to the \gls{mmctest}.

We make a simulation study with the point processes as in Section~\ref{sec:simulation} with the scenario~\ref{enum:S1}, and show power curves in Figure~\ref{fig:power_curves_global} and type I error probability curves in Figure~\ref{fig:FWER_curves_global}. We observe significant power gains compared to the multiple GET method of \cite{Mrkvicka2017:Multiple}, particularly the most powerful procedure is the corrected Fisher combination test on the conformal \pvals{} in all the test cases.

\bibliographystyle{elsarticle-harv.bst}
\bibliography{mybib.bib}

\end{document}